\def\BibTeX{{\rm B\kern-.05em{\sc i\kern-.025em b}\kern-.08em
    T\kern-.1667em\lower.7ex\hbox{E}\kern-.125emX}}
\DeclarePairedDelimiterX{\norm}[1]{\lVert}{\rVert}{#1}
\DeclarePairedDelimiterX{\bnorm}[1]{\biggl\lVert}{\biggr\rVert}{#1}
\DeclarePairedDelimiterX{\abs}[1]{\lvert}{\rvert}{#1}
\newtheorem{definition}{Definition}
\newtheorem{theorem}{Theorem}
\newtheorem{proof}{Proof}
\def\de{\overset{\Delta}{=}} 
\def\R{\mathbb{R}}
\def\A{K_1}
\def\ind{\mathbbm{1}} % indicator 
\def\card{\textrm{card}}
\def\N{\mathcal{N}}
\def\D{\mathcal{D}} %datasets
\def\M{\mathcal{M}} 
\def\P{{ \mathrm{pr} }}
\def\v{\varepsilon}
\def\D{D} %database
\def\A{A} %sub-database, query
\def\F{\mathcal{F}} %query events
\def\N{N} %\N(\A): count of $\A$
\def\Nt{\hat{N}} %\Nt(\A): perturbed count of $\A$
\def\Alg{\texttt{Algo}} %  algorithm, mapping $\A \mapsto \Nt(\A)$
\def\U{U} %utility of an algorithm for a set of queries
\def\bud{\texttt{budget}}
\def\M{M} %size of a query event
\def\lap{\texttt{Lap}}
\def\q{Q}
\def\w{\omega} % elementary subdatabase
\def\enc{\textbf{1}}
\def\DD{\tilde{\D}}
\def\AA{\tilde{\A}}
\title{Private Queries with Sigma-Counting}
\author{
  Jun Gao \\
  % Meta Platforms, Inc.\\
  % \texttt{jungao@meta.com}
  Independent Researcher\\
  \texttt{0618johnny@gmail.com}  
  \and
  Jie Ding \\
  University of Minnesota Twin Cities \\
  \texttt{dingj@umn.edu}
}
\begin{document}

\maketitle

\begin{abstract}

% CURRENT
% \textbf{Summary}: We develop a sigma algebra-based method for generating data-private responses to counting queries with desirable utility.

% CURRENT
Many data applications involve counting queries, where a client specifies a feasible range of variables and a database returns the corresponding item counts. A program that produces the counts of different queries often risks leaking sensitive individual-level information. A popular approach to enhance data privacy is to return a noisy version of the actual count. It is typically achieved by adding independent noise to each query and then control the total privacy budget within a period. This approach may be limited in the number of queries and output accuracy in practice. Also, the returned counts do not maintain the total order for nested queries, an important feature in many applications. This work presents the design and analysis of a new method, sigma-counting, that addresses these challenges. Sigma-counting uses the notion of sigma-algebra to construct privacy-preserving counting queries. We show that the proposed concepts and methods can significantly improve output accuracy while maintaining a desired privacy level in the presence of massive queries to the same data. We also discuss how the technique can be applied to address large and time-varying datasets.

\end{abstract}

\begin{IEEEkeywords}
Privacy, Query, Sigma algebra
\end{IEEEkeywords}

% \begin{IEEEkeywords}
% counting query, privacy, sigma-algebra 
% \end{IEEEkeywords}

% \vspace{-0.1in}
\section{Introduction}
\label{sec_intro}
% \vspace{-0.1in}

% (Motivation)
In a growing number of machine learning applications, clients will query the counts of items satisfying particular constraints from a database. Common examples are counting of user clicks in recommendation systems~\citep{ricci2011introduction}, digital marketing~\citep{krishna1995extending}, and nowcasting~\citep{youn2016nowcast}.
Specifically, in the digital marketing business, a commercial client may query the count of potential users in a particular area from a third-party database; in urban planning, a client may query the count of a cohort of residents from a government database. These counting queries are often realized by putting particular criteria on the available variables.
Examples are `individuals who live in city A,’ `users who log in more than three times a day and who are above 30,’ and `residents whose salary are more than \$5k and who own at least one car.’
%[Add an open survey result that shows the business market]

However, counting queries are calculated from cohorts of individuals, and it inevitably leaks some individual-level information. For example, whether a particular person has been to a local clinic may be sensitive information but may be revealed by an adversary, who sends a large number of queries to the clinic. Because there is a risk of leaking an individual's sensitive information, we need a privacy-preserving technique to manage the counting queries. 
%A typical risk is that given a large enough amount of queries is available to one certain employee or adversary since it is possible to recover all the partitions' population sizes. 

A popular and intuitive method to enhance data privacy is to add independent noise to each query's actual count. Depending on the type of noise to add, there are different privacy mechanisms such as the Gaussian or Laplacian mechanisms. Under a suitable data privacy framework, trade-offs between the privacy and noise level can be rigorously quantified. We will base our analysis on the framework of differential privacy (DP)~\citep{dwork2006calibrating,dwork2011differential}, a notion particularly suitable for database queries. 
It quantifies privacy leakage by a \textit{privacy parameter} $\v$ that bounds the likelihood ratio of the perturbed output under two databases differing in a single row. A smaller value of $\v$ indicates better privacy. 
We will take the Laplacian mechanism as an example. To achieve $\v$ privacy leakage, we need the noise scale to be proportional to $1/\v$.
When there are $k$ separate queries to the same database, differential privacy's composition property ensures that the overall privacy leakage is $k\v$~\citep{mcsherry2009privacy}.

For many governmental or commercial entities that provide database services, it is critical to improve the response accuracy and number of allowed queries simultaneously. However, in most practical applications, a database manager will need to pre-specify the total privacy budget (at least within a period). The budget is an upper bound of privacy leakage (e.g., the above $k\v$). Once the privacy leakage goes beyond the budget, the database will no longer respond. This directly leads to trade-offs between the number of allowed queries and the accuracy of each response. 

In practice, however, 
% this approach is limited by the utilitarian mechanism within a fixed privacy budget and loss in output accuracy. 
there is usually a massive number of queries to the same database. To maintain a fixed privacy budget, the database manager needs to impose a considerable amount of noise on each query response, which hinders the output accuracy.
Moreover, another critical yet often overlooked issue is that the returned counts do not maintain the total order for nested queries. The total order means that if two queries, corresponding to sub-databases $A_1$ and $A_2$, satisfy $A_1 \subset A_2$, it holds that $\N(A_1) \leq \N(A_2)$. Here $\N(\cdot)$ denotes the returned count. Maintaining total order is crucial in many real-life applications, e.g., data publishing by the U.S. Bureau of Labor Statistics~\citep{groshen2015opportunities}, recommendation systems~\citep{ricci2011introduction}, and various downstream machine learning based on ordinal data~\citep{oh2015collaboratively}.

%% CONTRIBUTIONS
% \vspace{-0.1in}
\subsection{Contributions} \label{subsec_cont}
% \vspace{-0.1in}
% This main contributions of this work are three folds.

% $\bullet$ 
We identify two critical challenges of generating privacy-preserving counting responses, namely the `budget excess’ and `non-monotonic counts’. 
% $\bullet$ 
Then, we present the design and analysis of a new method, sigma-counting, that addresses the challenges above. Sigma-counting uses the notion of sigma-algebra to construct privacy-preserving counting queries to a database. It enjoys the following properties: 1) High utility in the presence of a large number of queries, and 2) Monotonicity, meaning that the returned count of a subset $\A_2$ will not be smaller than that of a larger subset $\A_1$ ($\A_1 \subseteq \A_2$). 

% To the best of our knowledge, this is the first work that addresses the above important properties in privacy-preserving counting queries. 
Sigma-counting fully utilizes the nature of database structure: many queries are overlapping, and many sub-databases are repetitively included in the queries. We show that the proposed novel concepts and methods can significantly improve output accuracy while achieving the same privacy level. On the other hand, given the same privacy budget and output accuracy, sigma-counting can respond to many more queries than existing methods.
% The main contributions are supported by new concepts, methods, and theoretical analysis.
% The proposed method is shown to significantly outperform the existing popular approach in terms of utility-privacy tradeoffs. 
We also extend the proposed method to handle databases with a large number of attributes or time-varying entries.  
\subsection{Related work}
% \vspace{-0.1in}
%We propose a novel privacy-preserving method for responding to counting queries. 

%\textit{Data privacy}.
A popular framework of evaluating data privacy is through differential privacy (DP)~\citep{dwork2006calibrating,dwork2011differential}, a cryptographically motivated definition of privacy that has been studied over the past decade across different fields~\citep{chaudhuri2011differentially,sarwate2013signal,dong2019gaussian,neunhoeffer2020private,vietri2020new}. DP measures privacy leakage by bounding the likelihood ratio of the output of an algorithm under two databases differing in any single individual. 
DP is statistical database privacy, which is particularly suitable for privatizing individual identity in a database. 
Unlike database privacy, there is the notion of local privacy that aims to protect privacy during the data collection stage, without trusted database management. Examples are local differential privacy~\citep{evfimievski2003limiting,dwork2006calibrating,kasiviswanathan2011can}, secure multi-party computing~\citep{yao1982protocols,chaum1988multiparty} and homomorphic encryption~\citep{gentry2009fully,armknecht2015guide}. % TO EDIT AFTER ACCEPTANCE

Our work focuses on the database privacy scenario. It involves three parties: data owner (often an individual) who provides sensitive data, database (a trusted third-party) who manages data collected from their owners), and client (often data scientists) who queries statistics from the database. 
There has been an active line of research in database queries.
\citet{batchlinearqueries} developed a method to address batches of linear counting queries, which used low-rank mechanisms to efficiently apply DP mechanism and achieve high accuracy.
\citet{dphistogram} developed a method to construct the optimal DP-compliant histograms from the original count sequences.
\citet{privatesql} developed a two-step method to integrate DP into database querying.
\citet{johnson2018towards} developed practical techniques for DP in real-world SQL-based analytical systems.
\citet{li2010optimizing} developed a method named matrix mechanism to respond to a workload of predicate counting queries.
\citet{proserpio2012calibrating} showed that scaling down the contributions of challenging records non-uniformly can improve accuracy by bypassing the worst-case requirements of noise magnitudes.
\citet{mcsherry2009privacy} developed a programming interface for DP-based data analysis through a SQL-like language.
\citet{mohan2012gupt} developed a model of data sensitivity that degrades data privacy over time to enable efficient allocation of different levels of privacy, while guaranteeing an overall constant privacy level.
Different from the many existing works that have addressed system development and privacy-preservation for particular queries, we focus on the challenges regarding massive overlapping queries to the same database (Subsection~\ref{subsec_cont}). 

The remaining paper is organized as follows.
In Section~\ref{sec_form}, we describe the problem formulation.
In Section~\ref{sec_3}, we describe the main idea of sigma-counting, its general workflow, and detailed algorithms. 
In Section~\ref{sec_con}, we make our conclusion. 
In the supplementary document, we provide some additional proofs and experimental studies on both simulated and real-world data.

% \vspace{-0.1in}
\section{Problem Formulation} \label{sec_form}
% \vspace{-0.1in}

\subsection{Background: database query}
% \vspace{-0.1in}

We first introduce some notations about database queries.

% \vspace{-0.1in}
\begin{definition}[Database]
A database $\D$ is a $n \times p$ matrix. 
Each row representing a unique object, and each column represents a feature. %, which describes a trait of each person. 
\end{definition}
% \vspace{-0.1in}

For example, a database may consist of a set of mobile users and their log-activity-related features or a set of residents in a city with demographic features. 

% \vspace{-0.1in}
\begin{definition}[Counting query]
A counting query is a constraint on $p$ features sent from a client to the database $\D$, and a response is the number of rows in $\D$ satisfying the constraint. 
\end{definition}
% \vspace{-0.1in}

Upon the receipt of a query, the database can filter out an $n' \times p$ ($n' \leq n$) submatrix of $\D$, and return the number of rows $n'$ to the client. 
In many practical applications, e.g., the Machine-Learning-as-a-Service (MLaaS)~\citep{alabbadi2011mobile, ribeiro2015mlaas} and digital marketing~\citep{krishna1995extending,ricci2011introduction}, the client is possibly an adversary that aims to reverse-engineer sensitive individual information from the query responses. An emerging concern of this kind is the membership inference attacks~\citep{shokri2017membership}, where an adversarial client aims to infer whether a particular individual  (a row of features) exists in the database or not.
A consequence of the attack is that an adversary can reconstruct the database.
A principled way to enhance privacy is to add noises to the actual count and evaluate trade-offs under a data privacy framework.

% \vspace{-0.1in}
\subsection{Background: differential privacy (DP)}
% \vspace{-0.1in}

Differential privacy~\citep{dwork2004privacy,dwork2006our,dwork2011differential} is a framework that evaluates the worst-case scenario of leaking the existence of a single individual from potential adversaries. It is particularly suitable for database privacy, where a client queries summary statistics (such as counts) from the database without accessing the raw data. 

% \begin{definition}
% 	Let $\Y$ denote a set of private data, and $\Y_{-y}$ the set that excludes an element $y\in\Y$. Let $Z$ denote a random variable with a conditional distribution density on $\Y$ or $\Y_{-y}$.
%  For a given privacy parameter $\alpha \geq 0$, a privacy mechanism $\M$ is $\alpha$-differentially private if for all $y\in \Y$,
% 	\begin{align}
% 	   e^{\alpha} \leq \frac{p(z \mid \Y)}{p(z  \mid Y_{-y})} \leq e^{\alpha} .\label{eq31}	
% 	\end{align}
% \end{definition}

% \vspace{-0.1in}
\begin{definition}[$\v$-differential privacy]\label{eq_DP}
A randomized mechanism $\M: \D \mapsto \mathcal{R}$ with domain $\D$ and range $\mathcal{R}$ satisfies $\v$-differential privacy (DP) if for any two adjacent inputs $\D_1,\D_2$ and for any subset of outputs $S \subseteq \mathcal{R}$, it holds that $\P(\M(\D_1) \in S) \leq e^{\v} \P(\M(\D_2) \in S)$. %$ + \delta$.
\end{definition}
% \vspace{-0.1in}

Here, the adjacency means that $\D_1,\D_2$ only differ in one row. 
Briefly speaking, differential privacy measures privacy leakage by a \textit{privacy parameter} $\v$ that bounds the likelihood ratio of the output of a (private) algorithm under two databases differing in a single individual. The smaller $\v$ is, the more privacy for any single individual. 
The original definition of $\v$-differential privacy was also extended to allow an additive term $\delta$, so that $\P(\M(d) \in S) \leq e^{\v} \P(\M(d') \in S)+\delta$~\citep{dwork2006our}. % We use the variant introduced by \citep{dwork2006our}, which allows for the possibility that plain $\v$-differential privacy is broken with probability $\delta$. 

A standard randomized mechanism is described below. Suppose that a query $f: \D \rightarrow \R^d$ satisfies 
\begin{align}
\norm{f(\D_1)-f(\D_2)}_1 \leq c_f \label{eq_3}
\end{align}
for any adjacent $\D_1,\D_2$ and some constant $c_f>0$. Here, $\norm{\cdot}_1$ is the $L_1$ norm.
Let $\texttt{Lap}(\lambda)$ denote the Laplace distribution with density function $p_{\lambda}(x)=(2 \lambda)^{-1} \exp(-|x|/\lambda)$.
Then, it can be verified according to Definition~\ref{eq_DP} that
\begin{align}
    \M(\D) = f(\D) + \zeta, %\quad \zeta \sim \texttt{Lap}(c_f/\v)
    \label{eq_mech}
\end{align}
where $\zeta$ consists of i.i.d. $\texttt{Lap}(c_f/\v)$,
is a mechanism that satisfies $\v$-differential privacy. The intuition is that more privacy (namely smaller $\v$) requires larger noise. 
An important property of differential privacy is the {sequential composition}. if there are $n$ independent mechanisms: $\M_1,\ldots,\M_k$, whose privacy parameter are $\v_1,\ldots,\v_k$ differential privacy, respectively, then any composition $g$ of them, $g(\M_1,\ldots,\M_k)$, is $\sum_{i=1}^{k}\v_i$-differentially private~\citep{mcsherry2009privacy}.
This property says that if each module in a system accesses data in a differentially private way, their modular combination is also private with a quantifiable privacy parameter. It leads to the concept of the privacy loss budget.

% \vspace{-0.1in}
\begin{definition}[Privacy budget]
    A privacy budget (denoted by $\v_{\bud}$) is the largest differential privacy parameter allowed for a database to respond to queries.
\end{definition}
% \vspace{-0.1in}

For example, a database manager will control the total budget in a particular period. Suppose that it has made $m$ differentially-private responses, each with privacy parameter $\v$. Then it will stop at the $m$th query with  $m=\lfloor \v_{\bud}/\v \rfloor$.
% with privacy parameters $\v_1,\v_2,\ldots$. Then the it will stop at the $m$th query with   
% \begin{align}
%     \v_1+\cdots+\v_m \leq \v_{\bud}, \label{budget}
% \end{align} 
% and $\v_1+\cdots+\v_{m+1} > \v_{\bud}$.
% It is simplified ... 
% The database manager need not restrict particular types of queries. 

%  	\begin{algorithm*}[tb]
% 		\centering
% 		\caption{Generic Algo}\label{algo}
% 		\footnotesize
% 		\begin{algorithmic}[1]
% 			\renewcommand{\algorithmicrequire}{\textbf{Input:}}
% 			\renewcommand{\algorithmicensure}{\textbf{Output:}}
% 			\REQUIRE A
% 			\\ \hrulefill Stage I: Learning  \hrulefill \\
% 			\STATE A fits a model using $(y_a, X_a)$ as labeled data, by 
% 			\ENSURE A obtains 
% 		\end{algorithmic}
% 	\end{algorithm*}

% \vspace{-0.1in}
\subsection{Benchmark algorithm} \label{subsec_bench}
% \vspace{-0.1in}

In the context of a counting query, a client sends a query that can be associated with a unique sub-database $A$, and the database manager returns the number of rows in $\A$, denoted by $\N(A)$. Following the notation in (\ref{eq_3}), the counting query function is $f: A \mapsto N(A)$. Clearly, any counting query function $f$ has a sensitivity of $c_f = 1$. 
A standard method used in the literature is to respond the count using the mechanism in (\ref{eq_mech})~\citep{mcsherry2009privacy,proserpio2012calibrating,mohan2012gupt,li2010optimizing}).
In other words, the returned count is
\begin{align}
    \Nt(A) = \N(A) + \zeta, \quad \zeta \sim \lap(1/\v) \label{eq_4}
\end{align}
to ensure $\v$-differential privacy in one counting query, and continue responding until a pre-specified budget is reached.
We call this method the benchmark method, summarized in Algo.~\ref{algo_tradition}. We point out that the Laplacian noise can be replaced with other noise distributions such as  Gaussian. For notational convenience, we demonstrated the methods using Laplacian noise.
One may also round $\Nt(A)$ to its nearest integer to enhance interpretability in practice. It will not increase the privacy leakage due to DP's robustness to post-processing~\citep{dwork2011differential}.
This benchmark algorithm has time complexity $O(1)$ and space complexity $O(1).$ 

% \begin{wrapfigure}{R}{0.47\textwidth}
% \vspace{-8mm}
% \begin{minipage}{0.47\textwidth}
 	\begin{algorithm}[H]
		\centering
		\caption{Benchmark algorithm $\Alg_{1}$}\label{algo_tradition}
		\footnotesize
		\begin{algorithmic}[1]
			\renewcommand{\algorithmicrequire}{\textbf{Input:}}
			\renewcommand{\algorithmicensure}{\textbf{Output:}}
			\REQUIRE Pre-specified budget $\v_{\bud}$, a set of queries represented by $A_1,A_2,\ldots$
% 			\\ \hrulefill Stage I: Learning  \hrulefill \\
            \FOR{$t=1 \to \q \de \lfloor \v_{\bud}/\v \rfloor$}
			\STATE Respond the $\Nt(A_t)$ defined in (\ref{eq_4}) 
			\ENDFOR
			\ENSURE The returned queries $\Nt(A_1),\ldots,\Nt(A_{\q})$
		\end{algorithmic}
	\end{algorithm}
% \end{minipage}
% \end{wrapfigure}
%

In the benchmark algorithm, the total budget is often pre-fixed according to regulations. It clearly shows trade-offs between the number of responses and the accuracy of each response. In particular, for an enormous number of responses $\q$, the value of $\v \leq \v_{\bud}/\q$ has to be small, indicating a considerable noise added to each count response. We will introduce a new algorithm which significantly improves accuracy within the same privacy budget. 
For notational convenience, we introduce the following notions.

% \vspace{-0.1in}
\begin{definition}[Counting-query algorithm] \label{def_algo}
    A counting-query algorithm is defined by a set of mappings $\A \mapsto \Nt(\A)$ for all $A \subset \D$.
\end{definition}
% \vspace{-0.1in}

The benchmark Algo.~\ref{algo_tradition} is a counting-query algorithm. 
We will quantify any algorithm's utility using the average accuracy as evaluated by the average variance of $\N(A)$ conditional on the actual counts.

% \vspace{-0.1in}
\begin{definition}[Utility]
\label{utility}
The utility of a privacy algorithm for a finite set of queries $\F=\{\A\}$ is 
$$
% \U(\F, \Alg)
\U = \frac{1}{\card(\F)} \sum_{\A \in \F} \frac{2}{Var(\hat{N}(A) \mid N(A))}.
$$
\end{definition}
% \vspace{-0.1in}

\begin{figure}[tb]

%  \begin{minipage}{\textwidth}
  \begin{minipage}[b]{0.52\textwidth}
    % \begin{figure}[tb]
	\centering
	\includegraphics[width=1\linewidth]{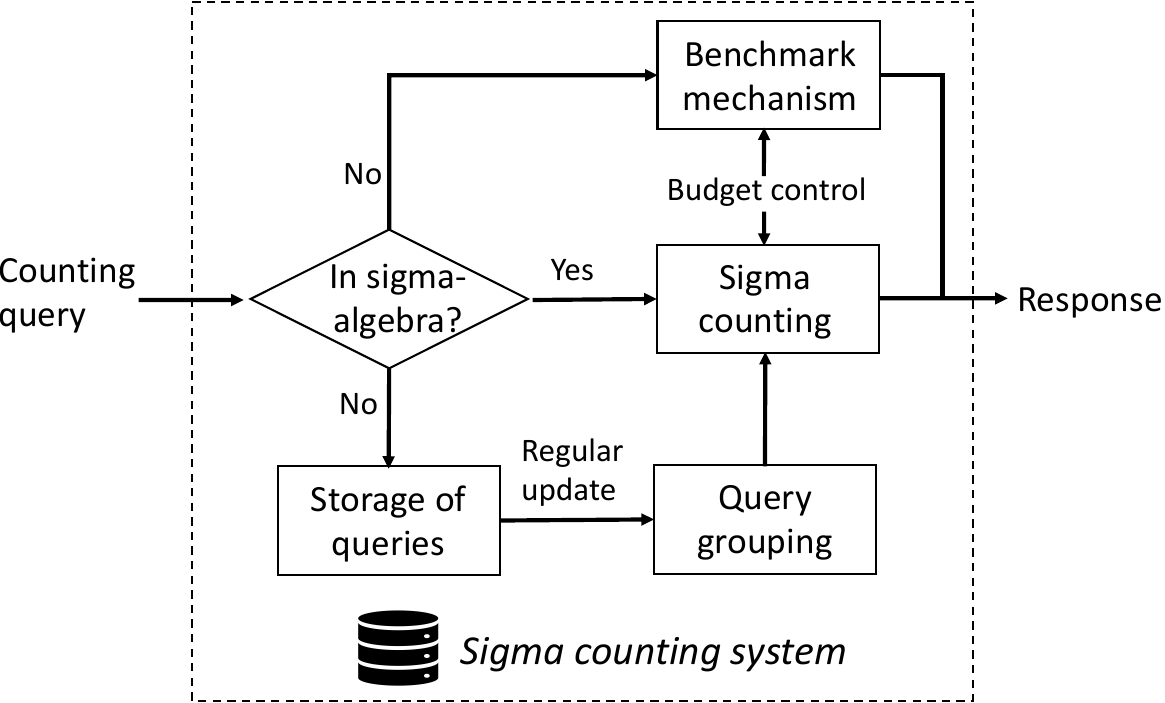}
    % \vspace{-0.3in}
	\captionof{figure}{Overview of the sigma-counting system.} 
	\label{fig3}
    % \end{figure}
%   \end{minipage}
%   \hfill
% \begin{table}[tb]

% \begin{minipage}[b]{0.4\textwidth}
\centering
{\small
\captionof{table}{A summary of notations}
\scalebox{0.8}{
\begin{tabular}{ll}
\toprule
Notation                               & Name                         \\\midrule
$\D$                                   & database, set of rows        \\
$\Omega$                               & set of sub-databases         \\
$|\Omega|$                             & total number of sub-databases         \\
$\omega$                               & element of $\Omega$          \\
$\A$                                   & subset of $\Omega$, query    \\
$\F$                                   & set of subsets in $\Omega$                \\
$\M(\A)$                               & size of $\A$, or the number of $\omega$'s in $\A$                 \\
$\N(\A)$                               & count in $\A$, total number of rows                \\
$\Nt(\A)$                              & responded/perturbed count in $\A$       \\
$\Alg$                                 & algorithm, mapping $\A \mapsto \Nt(\A)$ for a set of $\A$'s \\
$\U$                                   & utility, for an algorithm and a set of $\A$'s     \\
$\v$                                   & DP parameter \\ %random noise                 \\
% $\E$                                   & expectation                  \\
% $\p$                                   & probability distribution     \\
$\q$                                   & total number of queries     \\
\bottomrule
\end{tabular}
}
}
% \end{table}
%   \end{minipage}
  \end{minipage}
  
\end{figure}

% \vspace{-0.1in}
\section{Sigma-Counting System} \label{sec_3}
We first describe the main idea of sigma-counting. Suppose that there is a large number of overlapping queries to the same database. In the extreme case, there are $2^n$ possible queries, where $n$ is the number of rows. If the database manager applies Algo.~\ref{algo_tradition} to each of them, it requires that $\v \leq \v_{\bud}/2^n$, corresponding to a low response accuracy on average. Alternatively, we can apply noise to each row and then report the sum of each perturbed count.
For example, suppose that a database contains $n$ rows, where the $i$th row indicates the number of local residents in the range of $[a_i,a_{i+1})$, noted by $n_i$. Then, we generate $\hat{n}_i = n_i+\zeta_i$, where $\zeta_i$ denotes independent noise. For any future query in the range of $[a_k, a_j)$, the manager will return $\sum_{i=k}^{j-1} \hat{n}_i$.

We explain the general workflow of a sigma-counting system (illustrated in Fig.~\ref{fig3}). 

\textbf{Clustering queries}.
Suppose that there is a set of pre-existing queries known to the database manager.
The manager will first perform clustering to vertically-split the database variables into disjoint sets so that features across sets are less jointly queried than those within the same set. This module will simplify computation and enhance privacy-utility trade-offs in sigma-counting.

\textbf{Sigma-counting}. 
The manager will partition the database into a set of elementary sub-databases so that each query corresponds to a union of some of the elementary sub-databases. The notion of elementary sub-databases and queries are reminiscent of the outcome and events in probability, and they can be cast into one sigma-algebra. The database manager will then apply the mechanism (\ref{eq_4}) to the count of each sub-database. For each query, it will  add the perturbed counts from the corresponding sub-databases. The provider will separately apply the mechanism (\ref{eq_4}) to each query that involves cross-cluster variables. 

\textbf{Online queries}. 
Upon each new online query, the manager will check if it belongs to the existing sigma-algebra. If so, it will return the count from sigma-counting; Otherwise, it will separately apply benchmark algorithm~\ref{eq_4} and return it to the client. The manager will regularly perform clustering from historical queries.

\textbf{Evolving database}.
Upon the change of a row in the database, the manager will do the following.
For all the possible sub-databases affected by this change, their perturbed counts will change in the same way as actual counts change. The privacy leakage will be correspondingly adjusted. %So, the randomness of sub-databases remain the same. After this is completed, 
Any incoming online query will be responded in the same manner as described in the previous paragraph, until a pre-specified privacy budge is reached.

% In more complex application scenarios, the database $\D$ being queried can be time-varying. For example, a cohort of individuals occasionally update their demographic information in the database. Responding to the same query sent to a time-varying database at different time steps is more likely to leak individual information compared with that to a static database. We denote the database as $\D_{1:T} \de \{D_t, t= 1,\ldots,T\}$, where $t$ represents the time step and $D_t$ denotes the database at time $t$. We first introduce  differential privacy for a time-varying database.

We will introduce the key concepts, algorithms, and analysis of each module in the following subsections.
% We briefly summarize the fundamental ideas of each module of the system.

\subsection{Sigma-counting method} \label{subsec_offline}
% \vspace{-0.1in}

% \textbf{Offline Sigma-counting}. 

% \begin{wrapfigure}{R}{0.47\textwidth}
% \vspace{-8mm}
% \begin{minipage}{0.47\textwidth}
 	\begin{algorithm}[tb]
		\centering
		\caption{Sigma-counting ($\Alg_{2}$)}\label{algo_offline}
		\footnotesize
		\begin{algorithmic}[1]
			\renewcommand{\algorithmicrequire}{\textbf{Input:}}
			\renewcommand{\algorithmicensure}{\textbf{Output:}}
			%\REQUIRE The database manager pre-specifies a Sigma Algebra $\Omega$, and assume all the queries are an element of $\Omega.$
			\REQUIRE A set of queries represented by $\A_1,\ldots,A_{\q}$. 
% 			\\ \hrulefill Stage I: Learning  \hrulefill \\
            \STATE Induce a sigma-algebra $(\Omega, \F)$  from the queries
            \STATE For each element/sub-database $\w \in \Omega$, apply the perturbation scheme (\ref{eq_4}) to the query $\{\w\}$
            \FOR{any query $\A$}
			\STATE Respond
			\begin{align}
            \Nt(\A) = \sum_{\w \in \A} \Nt(\{\omega\}) \label{sig}
            \end{align}
			\ENDFOR
			\ENSURE The returned counts $\Nt(A_1),\ldots,\Nt(A_{\q})$
		\end{algorithmic}
	\end{algorithm}
% \end{minipage}
% \end{wrapfigure}
%
Suppose that the underlying database $\D$ is partitioned into a set of sub-databases of $D$, denoted by $\w_1,\ldots,\w_k$. In other words, $\w_i \cap \w_j = \emptyset$ if $i \neq j$, and $\cup_{i=1}^k \w_i = D$.
% With slight abuse of notation, we use $D_1 \cup D_2$ to denote the larger database concatenated from $D_1$ and $D_2$.
We treat each $\w_1$ as an `element' of the set $\Omega \de \{\w_1,\ldots,\w_k\}.$

Recall that a sigma-algebra on a set $\Omega$ is a collection $\F$ of subsets of $\Omega$ that includes $\Omega$ itself, is closed under complement, and is closed under countable unions.
We use $(\Omega, \F)$ to denote the largest sigma-algebra generated by the finite set $\Omega$. In other words, $\F$ is the set of possible subsets of $\Omega$. Any element of $\F$, denoted by $\A \subset \Omega$, is interpreted as the sub-database associated with a particular query, or simply a `query.'
For example, if $\A=\{\w_1, \w_2\}$, it contains the data items in $\w_1$ or $\w_2$. With a slight abuse of notation, we also write $\A=\w_1 \cup \w_2$ and treat it as a sub-database of $\D$. We sometimes write $\omega$ and $\{\omega\}$ interchangeably. 

In practice, the database will receive a set of frequently seen queries. Suppose that $\A_1,\ldots,\A_t$ are those pre-specified queries. They naturally induce a sigma-algebra $(\Omega, \F)$  that is the smallest among all those that generate the queries. In other words, $\Omega$ is the smallest set that contains each $\A_i$ as its subset.
Following the above notation, we introduce the following counting-query algorithm (Definition~\ref{def_algo}).

For each $\w\in\Omega$, we apply the following mechanism to obtain the count. %$\Nt(A)$, 
\begin{align}
    \Nt(\omega) = (\N(\omega) + \zeta)^{+}, \quad \zeta \sim \lap(1/\v) \label{eq_10}
\end{align}
Here, $x^{+} \de \max(x,0)$.
For a general query $\A \in \F$, we then calculate the count to be the sum of the counts of its elements. The algorithm pseudocode is summarized in Algo.~\ref{algo_offline}.

The mechanism in (\ref{eq_10}) is slightly different with the benchmark method in (\ref{eq_4}) so that $\Nt(\omega) \geq 0.$
It will cause some bias to the response, but it is negligible for a moderately large sub-database $\omega$ in reality.

% \begin{figure}[tb]
%  \begin{minipage}{\textwidth}
%   \begin{minipage}[b]{0.49\textwidth}
%     % \begin{figure}[tb]
% 	\centering
% 	\includegraphics[width=1\linewidth]{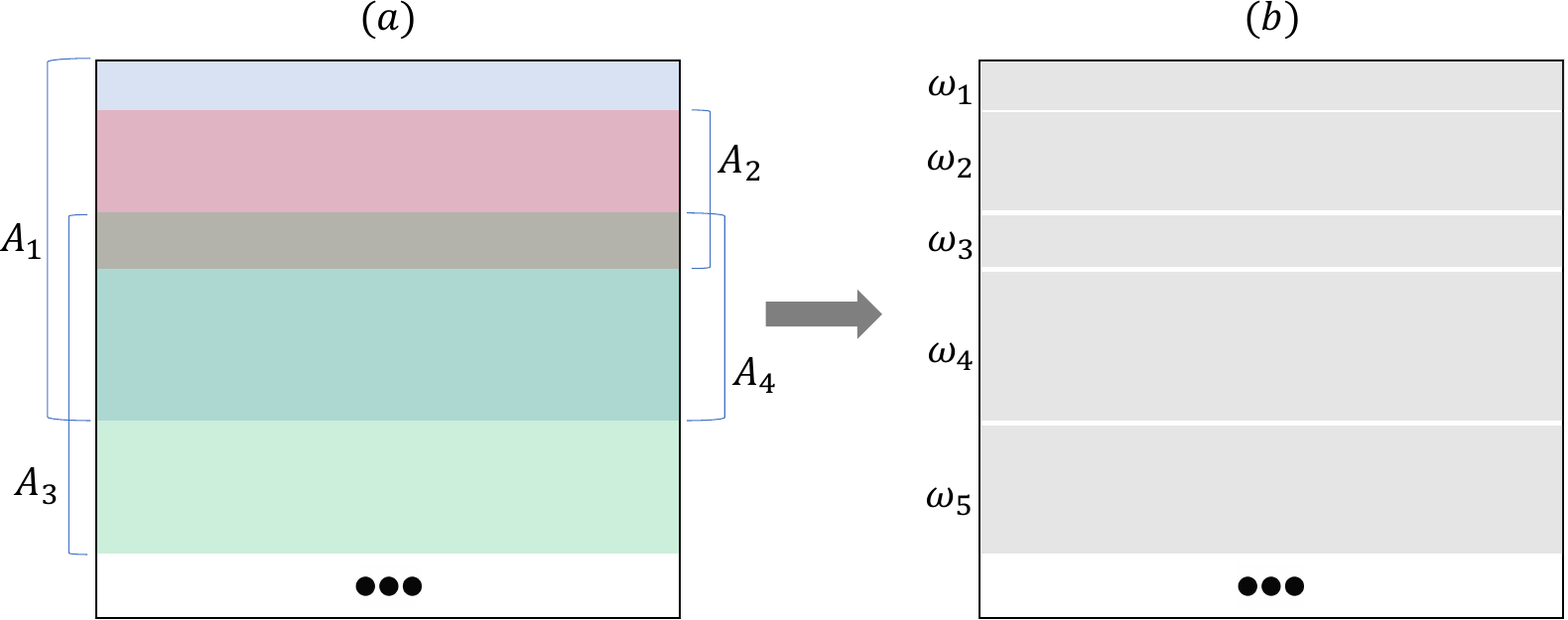}
%     % \vspace{-0.3in}
% 	\captionof{figure}{Illustration of the sigma-counting.} 
% 	\label{fig1}
% 	\end{minipage}
% 	\hfill
%   \begin{minipage}[b]{0.49\textwidth}
% 	\centering
% 	\includegraphics[width=1\linewidth]{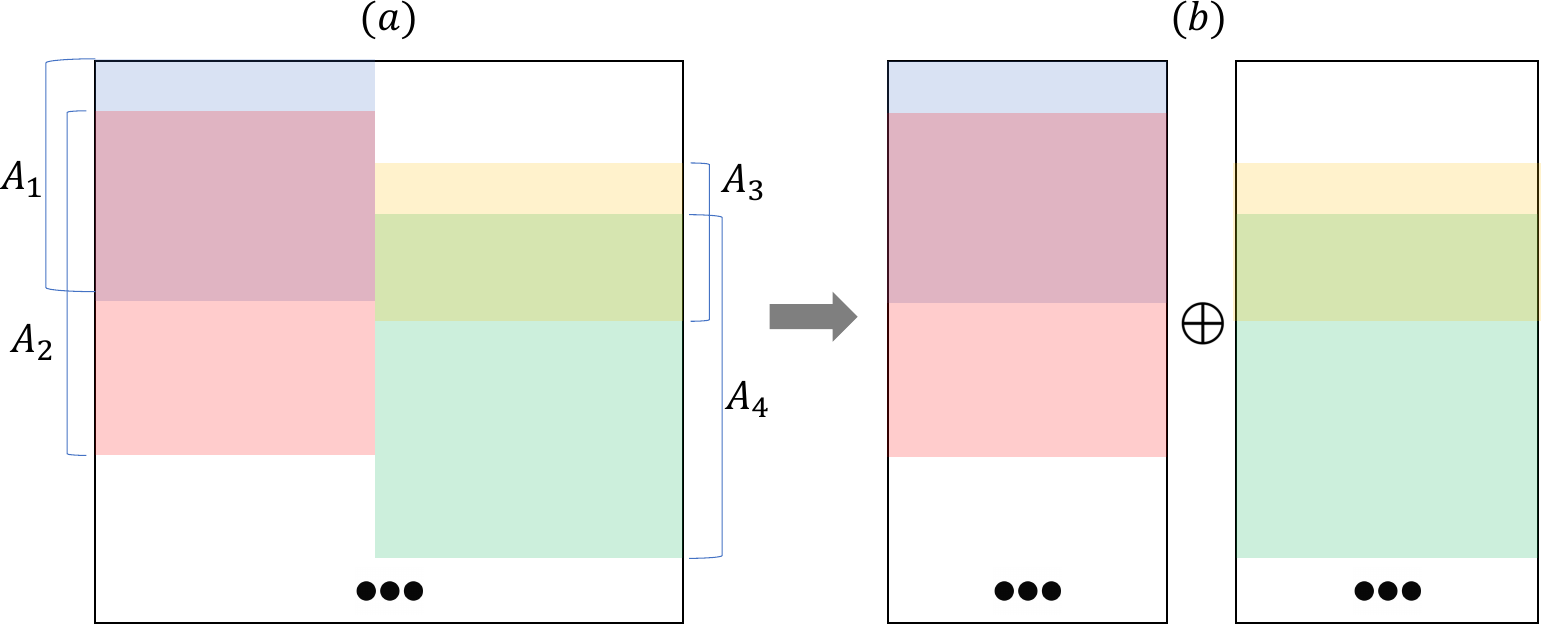}
%     % \vspace{-0.3in}
% 	\captionof{figure}{Illustration of the query grouping.} 
% 	\label{fig2}
%   \end{minipage}
%   \end{minipage}	
% \end{figure} 

\begin{figure}[tb]
%  \begin{minipage}{\textwidth}
  \begin{minipage}[b]{0.49\textwidth}
    % \begin{figure}[tb]
	\centering
	\includegraphics[width=1\linewidth]{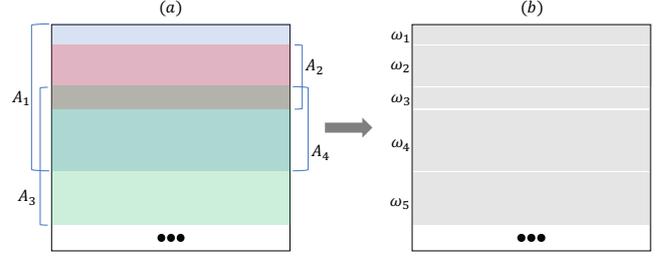}
    % \vspace{-0.3in}
	\captionof{figure}{Illustration of the sigma-counting.} 
	\label{fig1}
	\end{minipage}
% 	\hfill
%   \begin{minipage}[b]{0.49\textwidth}
% 	\centering
% 	\includegraphics[width=1\linewidth]{figures/fig2.pdf}
%     % \vspace{-0.3in}
% 	\captionof{figure}{Illustration of the query grouping.} 
% 	\label{fig2}
%   \end{minipage}
%   \end{minipage}	
\end{figure}

\begin{figure}[tb]
%  \begin{minipage}{\textwidth}
%   \begin{minipage}[b]{0.49\textwidth}
%     % \begin{figure}[tb]
% 	\centering
% 	\includegraphics[width=1\linewidth]{figures/fig1.pdf}
%     % \vspace{-0.3in}
% 	\captionof{figure}{Illustration of the sigma-counting.} 
% 	\label{fig1}
% 	\end{minipage}
% 	\hfill
  \begin{minipage}[b]{0.49\textwidth}
	\centering
	\includegraphics[width=1\linewidth]{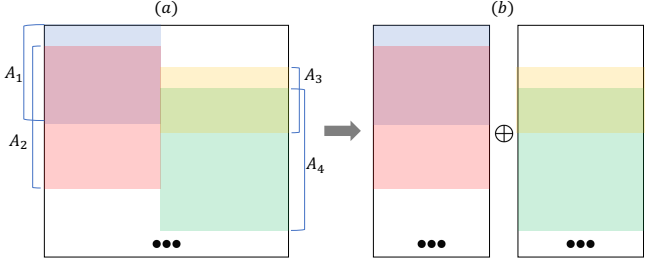}
    % \vspace{-0.3in}
	\captionof{figure}{Illustration of the query grouping.} 
	\label{fig2}
   \end{minipage}
%   \end{minipage}	
\end{figure}

For each query, this algorithm has time complexity $O(|\Omega|)$ and space complexity $O(|\Omega|)$.
The following Theorem~\ref{thm_mono} states that the proposed algorithm maintains the ordinal structure of queries. 

\begin{theorem} \label{thm_mono}
     Algorithm~\ref{algo_offline} is monotonic in the sense that for any two queries $\A_1,\A_2$ satisfying $\A_1 \subseteq \A_2$, we have $\Nt(\A_1) \leq \Nt(\A_2)$.
\end{theorem}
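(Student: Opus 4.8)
The plan is to exploit two structural facts about Algorithm~\ref{algo_offline}: (i) the responded count of a query is additive over the elementary sub-databases it contains, by~(\ref{sig}); and (ii) each elementary count $\Nt(\{\omega\})$ is nonnegative by construction, because of the truncation $x^{+}=\max(x,0)$ in~(\ref{eq_10}). Crucially, the perturbation scheme in Step~2 is applied once to each $\omega\in\Omega$, so the realized noise $\zeta_\omega$ attached to $\omega$ is the same regardless of which query $\omega$ appears in; hence the inequality we want will hold pathwise (for every realization of the noises), not merely in distribution.

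First I would note that since $(\Omega,\F)$ is a sigma-algebra and $\A_1,\A_2\in\F$ with $\A_1\subseteq\A_2$, the difference $\A_2\setminus\A_1=\A_2\cap\A_1^{c}$ is again an element of $\F$, and we have the disjoint decomposition $\A_2=\A_1\,\sqcup\,(\A_2\setminus\A_1)$ as subsets of $\Omega$. Applying the response rule~(\ref{sig}) to $\A_2$ and splitting the sum according to this decomposition gives
\begin{align}
\Nt(\A_2)=\sum_{\w\in\A_2}\Nt(\{\w\})
=\sum_{\w\in\A_1}\Nt(\{\w\})+\sum_{\w\in\A_2\setminus\A_1}\Nt(\{\w\})
=\Nt(\A_1)+\sum_{\w\in\A_2\setminus\A_1}\Nt(\{\w\}).
\end{align}
It then remains to observe that every summand in the last sum is nonnegative: by~(\ref{eq_10}), $\Nt(\{\w\})=(\N(\w)+\zeta)^{+}\ge 0$ for each $\w$. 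Therefore $\sum_{\w\in\A_2\setminus\A_1}\Nt(\{\w\})\ge 0$, and we conclude $\Nt(\A_2)\ge\Nt(\A_1)$, which is the claimed monotonicity.

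There is essentially no hard step here; the argument is a one-line consequence of additivity plus nonnegativity. The only point that deserves care — and the reason the $(\cdot)^{+}$ in~(\ref{eq_10}) is not merely cosmetic — is that without the truncation a negative realization of $\zeta$ on some $\w\in\A_2\setminus\A_1$ could make the extra sum negative and break the inequality; the truncation is exactly what removes this possibility. I would also make explicit in the write-up that the noises are generated per element and reused across queries, so that the comparison is between two sums over nested index sets sharing identical summands on the common part $\A_1$.
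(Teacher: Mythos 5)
Your proposal is correct and follows essentially the same route as the paper's own proof: decompose $\A_2$ as $\A_1$ plus $\A_2\setminus\A_1$, use the additivity of the response in~(\ref{sig}), and invoke the nonnegativity of each $\Nt(\{\w\})$ guaranteed by the truncation in~(\ref{eq_10}). Your added remarks on the pathwise reuse of the per-element noise and on why the $(\cdot)^{+}$ is essential are accurate elaborations of the same argument rather than a different proof.
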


\begin{proof}
From (\ref{sig}) and (\ref{eq_10}), we have 
\begin{align}
    \Nt(\A_2) = \Nt(\A_1) + \sum_{\w \in \A_2-\A_1} \Nt(\w)
    \geq  \Nt(\A_1) .
\end{align}

\end{proof}

The following result shows that if the number of queries is large, the proposed method will enjoy better utility than the benchmark method.
We use $|\Omega|$ to denote the cardinality of $\Omega$, namely the number of sub-databases.
\begin{theorem}\label{thm_offline}
    Suppose that $\q$ is the total number of queries. If $\q > |\Omega|^{3/2},$ Algorithm~\ref{algo_offline} has better utility than Algorithm~\ref{algo_tradition} given the same privacy budget.
\end{theorem}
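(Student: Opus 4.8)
The plan is to write the utility of each algorithm as an explicit function of the common privacy budget $\v_{\bud}$, the number of queries $\q$, and the number of elementary sub-databases $|\Omega|$, and then compare them query by query. I treat $\F$ in Definition~\ref{utility} as the given collection of $\q$ queries, so $\card(\F)=\q$, and I use that $Var(\lap(1/\v))=2/\v^2$, so the summand $2/Var(\cdot)$ equals $\v^2$ divided by the number of independent Laplace draws entering that response. For Algorithm~\ref{algo_tradition}, answering $\q$ queries within $\v_{\bud}$ forces the per-query parameter $\v=\v_{\bud}/\q$ by sequential composition, and by (\ref{eq_4}) every response has $Var(\Nt(\A)\mid\N(\A))=2/\v^2$; hence every summand equals $\v_{\bud}^2/\q^2$ and $\U_1=\v_{\bud}^2/\q^2$.

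For Algorithm~\ref{algo_offline}, the only data-accessing operations are the $|\Omega|$ perturbations in (\ref{eq_10}), each of sensitivity $1$ and parameter $\v$, and every count returned in (\ref{sig}) is a deterministic function of these releases; by sequential composition and the post-processing invariance of DP the total leakage is $|\Omega|\v$, so matching the budget gives $\v=\v_{\bud}/|\Omega|$. Since the noises are independent across $\w$ and the clipping in (\ref{eq_10}) acts coordinatewise, the releases $\Nt(\w)$ are independent, so $Var(\Nt(\A)\mid\N(\A))=\sum_{\w\in\A}Var(\Nt(\w)\mid\N(\w))=M(\A)\cdot 2/\v^2$, where $M(\A)\le|\Omega|$ is the number of elements of $\A$; thus the $\A$-summand of $\U_2$ equals $\v^2/M(\A)=\v_{\bud}^2/(|\Omega|^2 M(\A))$.

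It remains to compare summand by summand. For every query $\A$ we have $|\Omega|^2 M(\A)\le|\Omega|^3$, while the hypothesis $\q>|\Omega|^{3/2}$ gives $\q^2>|\Omega|^3$; hence $\v_{\bud}^2/(|\Omega|^2 M(\A))\ge\v_{\bud}^2/|\Omega|^3>\v_{\bud}^2/\q^2$, i.e. the $\A$-summand of $\U_2$ strictly exceeds the corresponding summand of $\U_1$. Averaging over the $\q$ queries yields $\U_2>\U_1$, which is the claim.

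The step requiring care is the variance of the clipped release in (\ref{eq_10}): $Var(\Nt(\w)\mid\N(\w))$ is not exactly $2/\v^2$ because of the $(\cdot)^+$ operation. I would resolve this using the paper's standing assumption that each $\w$ is moderately large, so $\P(\N(\w)+\zeta<0)$ is negligible and therefore so are the clipping bias and its effect on the variance, giving $Var(\Nt(\w)\mid\N(\w))=2/\v^2+o(1)$; alternatively, one checks that $Var(\max(c+Z,0))\le Var(Z)$ for any symmetric $Z$, which only enlarges each $\A$-summand of $\U_2$ and leaves the conclusion intact. It is also worth stating the privacy accounting for Algorithm~\ref{algo_offline} explicitly, namely that the $\q$ derived answers cost nothing beyond the $|\Omega|$ elementary releases, since this is precisely the source of the improvement.
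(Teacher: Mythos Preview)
Your proposal is correct and follows essentially the same route as the paper: both express the benchmark utility as $\v_{\bud}^2/\q^2$ (the paper writes it as $\v^2$ with $\v=\v_{\bud}/\q$), lower-bound the sigma-counting utility by $\v_{\bud}^2/|\Omega|^3$ via $M(\A)\le|\Omega|$, and then compare. Your additional remarks on the clipping in~(\ref{eq_10}) and on post-processing for the privacy accounting are careful touches that the paper omits but are consistent with its stated assumptions.
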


\begin{proof}
We will compute the overall privacy leakage and utility for each algorithm separately. 

% In terms of privacy budget: 
We will first analyze benchmark Algo.~\ref{algo_offline}. By the composition property~\citep{mcsherry2009privacy}, the overall privacy leakage after $\q$ queries is
\begin{align}
\v_\bud(\Alg_{\ref{algo_tradition}}) \leq \q \cdot \v.\label{eq_12}
\end{align}
Here, $\v_\bud(\cdot)$ denotes the privacy budget consumed. 
Recall our definition of utility (Definition~\ref{utility}). Since the noise variance added to each query is $2\v^{-2}$, the utility is
$ \U(\Alg_{\ref{algo_tradition}}) = \v^{2}.$

% Notice inequality comes from the capping:
% $$Var( max( N + \zeta, 0)  )) \geq  \frac{1}{4} Var( N + \zeta  )),$$
% since $N \geq 0.$

In Algo.~\ref{algo_offline}, suppose that we apply $\v^{\prime}$-DP to each $\w$ (\ref{eq_10}).
The overall privacy leakage is 
\begin{align}
\v_\bud( \Alg_{2}) = |\Omega| \cdot \v^\prime. \label{eq_11}
\end{align}
For the utility, it is lower bounded by 
\begin{align}
\U(\Alg_{2})
\geq |\Omega|^{-1} {\v^{\prime}}^{2}. \label{eq_14}
\end{align}
% ??????? should explain more
% The last $\geq$ is due to the fact that $M(A) \leq |\Omega|$ for any query A. 
Given the same privacy budget, from (\ref{eq_12}) and (\ref{eq_11}), we have 
$ %$\begin{align} 
   \v^{\prime} =  \v \cdot \q / |\Omega|. 
$ %\end{align}
Taking the above $\v^{\prime}$ into (\ref{eq_11}), we obtain 
$ %\begin{align*}
    \v_{\bud}(\Alg_{2}) = \q \cdot \epsilon,
$ %\end{align*}
which is the same as (\ref{eq_12}). Taking $\v^{\prime}$ into (\ref{eq_14}), we obtain 
$ %$\begin{align}
    \U(\Alg_{2}) \geq   \v^{2} \cdot Q^2 / |\Omega|^3.
$ %\end{align}
%
% Sigma counting has utility $\geq \v^{2} Q^2/|\Omega|^3 $.
Consequently, the condition that $
\v^{2} Q^2/|\Omega|^3 > \v^2
$ or 
$
\q > |\Omega|^{3/2}
$
will guarantee $\U(\Alg_{2})  > \U(\Alg_{1})$, which completes the proof. 
 
\end{proof}
 
We note that there are $2^{|\Omega|}$ different possible queries within the sigma-algebra. Our method can handle many repetitive queries without compromising the overall privacy budget.

\subsection{Grouped sigma-counting}
% \vspace{-0.1in}

In practice, the database often contains numerous feature variables (columns). Queries are usually not uniformly distributed. For example, the queries sent from a client are either on demographic information, such as gender and age, or on economic data, such as income and location. Suppose that we can identify the queries into different groups, so that queries within the same group are much more frequent than those across different groups. We can then further increase the utility and reduce the computation cost of the sigma-counting method in Subsection~\ref{subsec_offline}. This idea is illustrated in Fig.~\ref{fig2}.
% In terms of real situations, common queries are often 'shallow': meaning a query $A$ often does not have so many querying conditions. It makes clustering possible: clustering the queries which are asking the same categories of information can reduce complexity significantly. 

Following the above idea, we propose the following method.
Recall that a query $\A$ can be regarded as a sub-database that meets certain constraints on the $p$ variables. 
We denote $a(\A)$ as the set of variables that actively constrain the $\A$. %\subseteq [1:p]
For example, suppose that the query is on the count of $\A$=`users between 60 and 70 years old who live in California.' Then, $a(\A)$ contains the age and location variables.
We represent $\A$ as the following binary vector $\enc_{\A} \in \{0,1\}^p$, 
\begin{align}
    \enc_{\A}[i] = \ind_{i \in a(\A)}. \label{eq_nume}
\end{align}
For a set of queries $\F=\{\A_1,\ldots,\A_t\}$, we cluster them according to their numeric representations $\enc(\A_1),\ldots,\enc(\A_t)$, create sigma-algebra for each cluster, and return perturbed counting.
The pseudocode is summarized in Algo.~\ref{algo_group}.

In terms of complexity, Algo.~\ref{algo_group} has the spatial complexity $O(\sum_{k=1}^{K} |\Omega_k|)$, and time complexity $O ( K + max_{1\leq k \leq K} |\Omega_k|)$.

% \begin{wrapfigure}{R}{0.5\textwidth}
% \vspace{-5mm}
% \begin{minipage}{0.5\textwidth}
 	\begin{algorithm}[H]
		\centering
		\caption{Grouped sigma-counting ($\Alg_{\texttt{3}}$)}\label{algo_group}
		\footnotesize
		\begin{algorithmic}[1]
			\renewcommand{\algorithmicrequire}{\textbf{Input:}}
			\renewcommand{\algorithmicensure}{\textbf{Output:}}
			\REQUIRE A set of queries $\F\de \{\A_1,\ldots,\A_{\q}\}$
			\STATE Apply clustering to their numerical embeddings $\{\ind_{i \in a(\A)}: \A \in \F\}$, as introduced in (\ref{eq_nume})
			\STATE Obtain $K$ clusters of queries
			\STATE For the $k$th cluster, let the queries induce a sigma-algebra $(\Omega_k, \Sigma_k)$, $1\leq k \leq K$
            \FOR{any query $\A \in \F$}
			\STATE Suppose that $\A$ belongs the cluster $k$. Apply Algo.~\ref{algo_offline} with $(\Omega_k, \Sigma_k)$ to obtain the $\Nt(A)$
			\ENDFOR
			\ENSURE The returned counts $\Nt(\A), \forall \A \in \F$
		\end{algorithmic}
	\end{algorithm}
	%
% \end{minipage}
% \end{wrapfigure}

% This complexity can be much less than that of Algo.~\ref{algo_offline}. For example, in the extreme case where the queried variables fall into $K$ disjoint subsets, we have $|\Omega| = \prod_{1\leq k \leq K} |\Omega_k|$.

\begin{theorem} 
\label{thm_group}
% Assume we can cluster the queries into $K$ cluster of queries $\F_i, 1\leq i \leq K$. For each group, we can create one Sigma algebra $\Omega_k, 1\leq i \leq K,$ so that all queries in the group will be a subset of $\Omega_k.$
Suppose that there are $\q$ number of queries, with
$$
\q > \biggl( \sum_{k=1}^{K} |\Omega_k| \biggr) \cdot \sqrt{ \max_{1\leq k \leq K} |\Omega_k| }.
$$ 
Then, Algo.~\ref{algo_group} has better utility than Algorithm~\ref{algo_tradition} given the same privacy budget.
\end{theorem}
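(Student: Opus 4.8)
The plan is to mimic the proof of Theorem~\ref{thm_offline}, replacing the single sigma-algebra by the $K$ cluster-wise sigma-algebras. First I would set up notation: in Algo.~\ref{algo_group}, suppose each elementary sub-database $\w$ in cluster $k$ is perturbed with parameter $\v_k'$. The key observation is that the sub-databases across all clusters are disjoint, so by the composition property~\citep{mcsherry2009privacy} the total consumed privacy budget is $\v_{\bud}(\Alg_{\ref{algo_group}}) = \sum_{k=1}^K |\Omega_k| \, \v_k'$. For simplicity (and as in the $K=1$ case) I would take a common $\v'$ across clusters, so $\v_{\bud}(\Alg_{\ref{algo_group}}) = \bigl(\sum_{k=1}^K |\Omega_k|\bigr)\v'$.

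Next I would compute the utility. Each query $\A$ in cluster $k$ is answered by summing at most $|\Omega_k|$ independent $\lap(1/\v')$ terms, so $Var(\Nt(\A)\mid\N(\A)) \leq 2|\Omega_k|/{\v'}^2 \leq 2\bigl(\max_j |\Omega_j|\bigr)/{\v'}^2$. Plugging into Definition~\ref{utility} gives the uniform lower bound
\begin{align}
\U(\Alg_{\ref{algo_group}}) \geq \frac{{\v'}^2}{\max_{1\leq k\leq K}|\Omega_k|}.
\end{align}
Then I would equate privacy budgets with the benchmark: from $\v_{\bud}(\Alg_{\ref{algo_tradition}}) \leq \q\v$ and the expression above, setting them equal yields $\v' = \v \q / \bigl(\sum_{k=1}^K |\Omega_k|\bigr)$. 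Substituting back,
\begin{align}
\U(\Alg_{\ref{algo_group}}) \geq \frac{\v^2 \q^2}{\bigl(\sum_{k=1}^K |\Omega_k|\bigr)^2 \, \max_{1\leq k\leq K}|\Omega_k|}.
\end{align}
Comparing with $\U(\Alg_{\ref{algo_tradition}}) = \v^2$, the condition $\q^2 > \bigl(\sum_{k=1}^K |\Omega_k|\bigr)^2 \max_k |\Omega_k|$, i.e. $\q > \bigl(\sum_{k=1}^K |\Omega_k|\bigr)\sqrt{\max_k |\Omega_k|}$, gives $\U(\Alg_{\ref{algo_group}}) > \U(\Alg_{\ref{algo_tradition}})$, which is the claim.

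I do not expect a serious obstacle here, since the argument is essentially the $K=1$ proof with $|\Omega|$ replaced by $\sum_k |\Omega_k|$ in the budget and by $\max_k |\Omega_k|$ in the per-query variance. The one point requiring a little care is that the privacy accounting must use the disjointness of sub-databases \emph{across} clusters: a change to a single row of $\D$ falls in exactly one $\w$ of exactly one cluster, so the overall mechanism is the composition of $\sum_k|\Omega_k|$ mechanisms each consuming $\v'$, not more. A secondary subtlety is whether to use a common $\v'$ or cluster-specific $\v_k'$; using a common $\v'$ is the simplest route and suffices for the stated threshold, though one could note that optimizing the $\v_k'$ subject to the budget constraint could only improve the utility, so the stated sufficient condition is conservative.
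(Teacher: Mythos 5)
Your proposal is correct and follows essentially the same route as the paper's proof: a common $\v'$ across clusters, composition giving budget $\bigl(\sum_k|\Omega_k|\bigr)\v'$, the per-query variance bound giving $\U \geq {\v'}^2/\max_k|\Omega_k|$, and substituting $\v' = \q\v/\sum_k|\Omega_k|$ to obtain the stated threshold. Your added remarks on disjointness across clusters and on possibly optimizing cluster-specific $\v_k'$ are reasonable refinements but do not change the argument.
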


\begin{proof}
Similar to Theorem~\ref{thm_offline}, the privacy leakage and utility of the benchmark algorithm is 
$$
\v_\bud(\Alg_{1}) = \q \cdot \v, \quad
\U(\Alg_{1}) = \v^{2}.
$$

For each $k$th cluster of queries in Algo.~\ref{algo_group}, similar to the proof of Theorem~\ref{thm_offline}, 
the privacy leakage and the utility satisfy
$$ \v_{\bud}(\Alg_{\ref{algo_group}}|_{\textrm{cluster}_k}) \leq |\Omega_k| \cdot \v^{\prime},$$
$$
\U(\Alg_{\ref{algo_group}}|_{\textrm{cluster}_k}) \geq \frac{{\v^{\prime}}^{2}}{|\Omega_k|}.
$$
If we apply ${\v^{\prime}}$-DP to each sigma-algebra element. 
Thus, the overall privacy leakage and utility of Algorithm~\ref{algo_group} satisfy
\begin{align}
     &\v_\bud(\Alg_{\ref{algo_group}}) 
     \leq \sum_{1\leq k \leq K}|\Omega_k| \cdot {\v^{\prime}}  \nonumber \\ 
    & \U(\Alg_{\ref{algo_group}})
    \geq \min_{1\leq k \leq K}  \frac{{\v^{\prime}}^{2}}{|\Omega_k|} 
    = \frac{{\v^{\prime}}^{2}}{\max_{1\leq k \leq K} |\Omega_k|}.\label{eq_30}
\end{align}

Therefore, to ensure the same privacy budget, we let 
$%\begin{align}
   \v^{\prime} =   q  \cdot \v / \sum_{i=1}^K |\Omega_k| . %\label{eq_16}
$ %\end{align}
Taking the above into (\ref{eq_30}), we obtain 
$$
    \U(\Alg_{\ref{algo_group}}) \geq \v^{2} \frac{\q^2}{(\sum_{i=1}^K |\Omega_k| )^2 \cdot \max_{1\leq i \leq K }( |\Omega_k|) }.
$$
% Sigma counting has utility $\geq \v^{2} Q^2/|\Omega|^3 $.
Thus, the condition in Theorem~\ref{thm_group} guarantees $\U(\Alg_{\ref{algo_group}})  > \U(\Alg_{1})$, which completes the proof. 
\end{proof}

Theorem~\ref{thm_group} shows that the grouped sigma-counting can be more favorable than the non-grouped counterpart.
For example, suppose that we have two variables/columns, and each variable has 10 categories. So, $|\Omega_1|=|\Omega_2| = 10$, and $|\Omega|=100$. Theorem~\ref{thm_offline} requires $\q > 1000$, while Theorem~\ref{thm_group} requires $\q > 63$. 

% \vspace{-0.1in}
\subsection{Online sigma-counting}
 	\begin{algorithm}[H]
		\centering
		\caption{Online sigma-counting ($\Alg_{4}$)}\label{algo_online}
		\footnotesize
		\begin{algorithmic}[1]
			\renewcommand{\algorithmicrequire}{\textbf{Input:}}
			\renewcommand{\algorithmicensure}{\textbf{Output:}}
			\REQUIRE A set of queries $\F$, and online arrived queries $\A_1, \A_2, \ldots$
			\STATE Apply Algo.~\ref{algo_group} with $\F$ and obtain sigma-algebras $(\Omega_k, \Sigma_k)$, $k=1,\ldots,K$
			\IF{$\A \in \Omega_k$ for some $k$}
			\STATE
			find the $\Omega_k$ that contains $\A$ and apply the corresponding mechanism (\ref{sig})
			\ELSE 
			\STATE apply Algo.~\ref{algo_tradition} with budget $\v_{\bud}- \sum_{k=1}^{K} |\Omega_k| \v$
			\ENDIF
            % \FOR{$t=1 \to K \de \lfloor (\v_{\bud}- \sum_{1\leq i \leq K} |\Omega_k| \v )/\v \rfloor$}
% 			\STATE 
% 			\ENDFOR
			\ENSURE The returned queries $\Nt(\A_1),\Nt(\A_2),\ldots$
		\end{algorithmic}
		\end{algorithm}
% \end{minipage}
% \end{wrapfigure}
%
%
In practice, we often encounter streaming queries that need to be processed in a timely fashion. 
For a sequence of queries online received, we propose to use the following solution. 
We regularly update the sigma-counting algorithm in batches (say every day or week), using the techniques developed in the previous subsections. 
Within any two updates, some of the received queries will fall into the established sigma-algebras. We will apply sigma-counting for them. For the remaining queries not belonging to 
a sigma-algebra, we apply the benchmark method (Algo.~\ref{algo_tradition}).
The algorithm pseudocode is summarized in Algo.~\ref{algo_online}. It has the same time and space complexity as Algo.~\ref{algo_group}.

\begin{theorem} 
\label{thm_online} 
Suppose that the total number of queries $\q$ satisfies
 \begin{align}
\q > \frac{ \big( \sum_{k=1}^{K} |\Omega_k| \big) \sqrt{\max_{k=1}^{K} |\Omega_k| }  }{1 -p}.  \label{cond3}
\end{align}
Then, Algo.~\ref{algo_online} has better utility than Algorithm~\ref{algo_tradition} given the same privacy budget.
\end{theorem}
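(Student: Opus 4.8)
The plan is to reduce the statement to Theorem~\ref{thm_group}. The key observation is that Algo.~\ref{algo_online} splits the $\q$ queries into two groups that it processes separately: the $(1-\p)\q$ queries that lie in one of the pre-built sigma-algebras $(\Omega_k,\Sigma_k)$, which are answered exactly as in grouped sigma-counting (Algo.~\ref{algo_group}), and the remaining $\p\q$ out-of-algebra queries, which are answered by the benchmark Algo.~\ref{algo_tradition} on the leftover budget; here $\p$ is the fraction of queries falling outside every established sigma-algebra. First I would fix the budget split. Perturbing each of the $S\de\sum_{k=1}^{K}|\Omega_k|$ elementary sub-databases once at parameter $\v^\prime$ costs $S\v^\prime$ by sequential composition, and answering the $\p\q$ out-of-algebra queries at the same per-query parameter $\v$ that Algo.~\ref{algo_tradition} uses costs $\p\q\v$; matching the benchmark total $\v_\bud(\Alg_1)=\q\v$ then forces
\[
\v^\prime=\frac{\v_\bud-\p\q\v}{S}=\frac{(1-\p)\,\q\,\v}{S}.
\]

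Next I would lower-bound the utility of Algo.~\ref{algo_online} averaged over all $\q$ queries (Definition~\ref{utility}). Each out-of-algebra response has variance $2\v^{-2}$, so it contributes utility exactly $\v^2$, i.e.\ the same as Algo.~\ref{algo_tradition}; over all $\p\q$ of them this adds $\p\,\v^2$ to the average. Each in-algebra response in cluster $k$ is a sum of at most $|\Omega_k|\le M\de\max_{1\le k\le K}|\Omega_k|$ independent perturbations from (\ref{eq_10}), hence has variance at most $2M/{\v^\prime}^{2}$ and utility at least ${\v^\prime}^{2}/M$, exactly as in the proof of Theorem~\ref{thm_group}; over all $(1-\p)\q$ of them this adds at least $(1-\p){\v^\prime}^{2}/M$. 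Substituting $\v^\prime$,
\[
\U(\Alg_4)\ \ge\ \p\,\v^{2}+(1-\p)\frac{{\v^\prime}^{2}}{M}\ =\ \p\,\v^{2}+\frac{(1-\p)^{3}\q^{2}\v^{2}}{S^{2}M}.
\]
Comparing with $\U(\Alg_1)=\v^{2}$ and cancelling the common term $\p\,\v^{2}$, the inequality $\U(\Alg_4)>\U(\Alg_1)$ is equivalent to $(1-\p)^{3}\q^{2}/(S^{2}M)>1-\p$, i.e.\ to $(1-\p)^{2}\q^{2}>S^{2}M$, i.e.\ to condition~(\ref{cond3}). In fact this is nothing but Theorem~\ref{thm_group} applied with $\q$ replaced by the effective count $(1-\p)\q$: on the in-algebra queries the online method is exactly Algo.~\ref{algo_group} run with the budget $(1-\p)\q\v$ that the benchmark would have spent on those same queries, while the out-of-algebra queries break even.

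The step that needs the most care is the budget bookkeeping: one has to notice that giving the benchmark branch the same per-query parameter $\v$ that Algo.~\ref{algo_tradition} itself uses is precisely what makes those responses break even against the benchmark, so that the comparison collapses onto the already-proved in-algebra inequality — a less careful allocation would not land on the clean threshold~(\ref{cond3}). The remaining points are routine: the leftover budget $\v_\bud-S\v^\prime=\p\q\v$ is exactly enough to answer the $\p\q$ benchmark queries at parameter $\v$; the truncation $(\cdot)^{+}$ in (\ref{eq_10}) does not increase the per-element variance (handled as in Theorem~\ref{thm_offline}), so it cannot weaken the utility bound; the floor in $\q=\lfloor\v_\bud/\v\rfloor$ and the integrality of $\p\q$ are immaterial to the threshold; and one should record $1-\p>0$ for the cancellation step, the degenerate case $\p=1$ making the hypothesis vacuous.
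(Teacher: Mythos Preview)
Your proposal is correct and follows essentially the same argument as the paper: you split the $\q$ queries into the $(1-\p)\q$ in-algebra and $\p\q$ out-of-algebra parts, allocate $\v'=\tfrac{(1-\p)\q\v}{\sum_k|\Omega_k|}$ to each elementary sub-database so that the total budget matches $\q\v$, lower-bound the overall utility by $\p\,\v^{2}+(1-\p)\,{\v'}^{2}/\max_k|\Omega_k|$, and reduce the comparison with $\U(\Alg_1)=\v^{2}$ to condition~(\ref{cond3}). Your framing of this as Theorem~\ref{thm_group} applied with effective query count $(1-\p)\q$ is a nice way to summarize the calculation, but the underlying steps are the same as the paper's.
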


\begin{proof}
% Similar to Theorem~\ref{thm_offline}, the privacy leakage and utility of benchmark algorithm is $$
% \v_\bud(\Alg_{1}) = \q \cdot \v.\label{eq_12}
% \quad
% \U(\Alg_{1}) = \v^{2}.$$

In Algo.~\ref{algo_online}, each query has two possibilities: either belonging to a counted sigma-algebra $(\Omega_k, \F_k)$ (named `class-one') or not (named `class-two'). 
There are $N \cdot (1-p)$ number of class-one queries.
% Denote the sigma-algebras for class-one queries as $(\Omega_k, \Sigma_i)$. 
The proof of Theorem~\ref{thm_group} indicates that the privacy leakage and utility of these queries satisfy
\begin{align}
\v_{\bud}(\Alg_{\ref{algo_online}}|_{\texttt{class-one}}) = \sum_{k=1}^K |\Omega_k| \cdot \v^\prime, \label{eq_20} \\ 
\U(\Alg_{\ref{algo_online}}|_{\texttt{class-one}}) \geq \frac{{\v^\prime}^{2}}{\max_{1\leq k \leq K} |\Omega_k| } . \label{eq_21}
\end{align}
For the $N \cdot p$ number of class-two queries, since we apply Algo.~\ref{algo_tradition}, the privacy leakage and utility are
\begin{align}
    &\v_{\bud}(\Alg_{\ref{algo_online}}|_{\texttt{class-two}}) = \q \cdot p \cdot \v,  \label{eq_22} \\
    &\U(\Alg_{\ref{algo_online}}|_{\texttt{class-two}}) = \v^2.\label{eq_23}
\end{align}

Combining (\ref{eq_20}) and (\ref{eq_22}), the overall privacy leakage of Algo.~\ref{algo_online} is
$
\sum_{k=1}^K |\Omega_k| \cdot \v^{\prime} +  \q \cdot p \cdot \v.
$
From (\ref{eq_21}) and (\ref{eq_23}), the overall utility is at least 
$
  (1-p) \cdot {\v^{\prime}}^{2}/\max_{1\leq k \leq K} |\Omega_k| + p \cdot  \v^{2}.
$
To ensure the same privacy budget with the benchmark algorithm, we let
$
\v^{\prime} = (1-p)\cdot \q \cdot \v / \sum_{k =1}^{K} |\Omega_k|.
$
Then, the utility is at least 
% \jun{adjust the equation a bit more}
\begin{align}
    & \U(\Alg_{\ref{algo_online}}) \geq  \nonumber \\
&  \biggl( \frac{(1-p)^3 \cdot Q^2}{(\sum_{k=1}^K |\Omega_k|)^2 \cdot ( {\max_{1\leq k \leq K} |\Omega_k| } )} + p \biggr)\v^2.
\end{align}
It follows that Algo.~\ref{algo_online} is better than Algo.~\ref{algo_tradition} (whose utility is $\v^2$) under Condition~\ref{cond3}.
%This completes the proof. 
\end{proof}

Compared with Theorem~\ref{thm_group}, the condition in Theorem~\ref{thm_online} involves an extra factor of $1/(1-p)$.
It is needed to account for the $p$ percentage of queries that need to be independently randomized. 
% Similar to the previous theorem, this bound seems worse than the previous theorem, but it is better. This algorithm accommodates outlying queries which is $100\cdot p $ percent of all queries. As long as $p$ is a small number, so that equation~(\ref{cond3}) holds, this new algorithm still has better utility. 

\subsection{Evolving database}
% \vspace{-0.1in}

% \begin{wrapfigure}{R}{0.45\textwidth}
% \vspace{-8mm}
% \begin{minipage}{0.45\textwidth}
 	\begin{algorithm}[H]
		\captionof{algorithm}{Sigma-counting for evolving database $\Alg_{5}$}\label{algo_evolve}
		\footnotesize
		\begin{algorithmic}[1]
			\renewcommand{\algorithmicrequire}{\textbf{Input:}}
			\renewcommand{\algorithmicensure}{\textbf{Output:}}
			\REQUIRE %A privacy budget $\v_{\bud}$, 
% 			A set of $\q$ queries repetitively sent to the database at each time $t$, denoted by $\A_{i,t}$, $i=1\ldots,\q,$ $t=1,\ldots,T$ 
A query $\A$ repetitively sent to the database at each time $t$, denoted by $\A_{t}$, $t=1,\ldots,T$
            % \FOR{$i=1 \to \q$}
             \IF{$t=1$}
             \STATE Apply (\ref{eq_4}) to obtain $\Nt(\A_{1})$
             \ELSE 
             \STATE Let
             $ %\begin{align}
                % \Nt(\A_{i,t}) = \Nt(\A_{i,1}) + \N(\A_{i,t}) - \N(\A_{i,1}) %\label{eq_ts}
                \Nt(\A_{t}) = \Nt(\A_{1}) + \N(\A_{t}) - \N(\A_{1}) %\label{eq_ts}
            $ %\end{align}
             \ENDIF
% 			\ENDFOR
			\ENSURE The returned counts $\Nt(\A_{t})$ for each $t$
		\end{algorithmic}
		\end{algorithm}
% \end{minipage}
% \end{wrapfigure}
%
In more complex application scenarios, the database $\D$ being queried can be time-varying. For example, a cohort of individuals occasionally updates their demographic information in the database. Responding to the same query sent to a time-varying database at different time steps is more likely to leak individual information than a static database. We denote the database as $\D_{1:T} \de \{D_t, t= 1,\ldots,T\}$, where $t$ represents the time step and $D_t$ denotes the database at time $t$. We first introduce  differential privacy for a time-varying database.
We define $(\D_{1:T},\DD_{1:T})$ to be adjacent, if $\D_t$ and $\DD_t$ are the same for all $t$ except for one row/individual, which belongs to $\D_1$ but not belong to $\DD_{1:T}$.
In other words, an individual of $\D_1$ may be removed or re-added into $\D_t$ at any future time step $t$, but it will not be counted in $\DD_t$ for any $t$.

% \vspace{-0.1in}
\begin{definition}[$\v$-differential privacy]\label{eq_DP2}
A randomized mechanism $\M: \D_{1:T} \mapsto \mathcal{R}^T$ satisfies $\v$-differential privacy (DP) if for any two adjacent inputs $\D_{1:T},\DD_{1:T}$ and for any subset of outputs $S \subseteq \mathcal{R^T}$, it holds that 
$$
e^{-\v} \leq  \P(\M(\D_{1:T})\in S) / \P(\M(\DD_{1:T}) \in S)  \leq e^{\v} .
$$
\end{definition}
% \vspace{-0.1in}

Let us consider a counting query $\A$ that will be repetitively sent to the database $\D_t$ at different time $t=1,\ldots,T$.
With a slight abuse of notation, we denote $A_t$ as the query at time $t$. Let $\AA_t$ denote the same query sent to the adjacent database $\DD_t$.
In our context, a mechanism meets $\v$-DP if
\begin{align*}
e^{-\v} & \leq  \frac{\P(\Nt(\A_t) = s_t, t = 1,\ldots,T) }{ \P(\Nt(\AA_t) = s_t, t = 1,\ldots,T)  } \leq e^{\v}
\end{align*}
for any choice of integers $s_t$.

% With this definition, the perturbation philosophy remains the same: we add the constant perturbations (concerning time) to the database and prove that certain privacy can still be guaranteed.
We propose the following solution to improve the utility-privacy trade-off.
At time $t = 1,$ we respond each query with the mechanism as in Algo.~\ref{algo_tradition}. For $t \geq 2$, we add the true difference between the counts at time $t$ and $1$. Thus, we add deterministic components to the responses.
The pseudocode is summarized in Algo.~\ref{algo_evolve}.
% For brevity, we omitted the sigma-counting part and 
For brevity, we only address the queries that do not fall into an existing sigma-counting, namely the counterpart of line 5 in Algo.~\ref{algo_online}. Also, we do not explicitly specify the privacy budget. 
% The number of queries $\q$ is understood as the maximum number that is allowed by a budget Algo.~\ref{algo_online} may or may not be used in conjunction with earlier algorithms.
The following Theorem~\ref{thm_evolve} shows that the method will achieve a better utility compared with Algo.~\ref{algo_tradition}.

\begin{theorem}\label{thm_evolve}
Assume that for any query $\A$, $\{\Nt(\A)\}_{1 \leq t \leq T}$ is a Markovian process with independent increments.
Assume that the probability of an individual remaining in $\A$ at $t=1,\ldots,T$ 
is $(1 - \rho)$ for some constant $\rho \in (0,1)$. 
Then Algo.~\ref{algo_evolve} satisfies $\v_0 \de \log \big( (1-\rho ) e^\v  + \rho e^{T\v} \big)$-DP.
\end{theorem}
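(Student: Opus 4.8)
The plan is to peel Algo.~\ref{algo_evolve} down to the single Laplace variable it really depends on, and then split the analysis according to the (random) trajectory of the one individual that distinguishes the two adjacent inputs.

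First I would unfold the recursion: by line~4 of Algo.~\ref{algo_evolve}, $\Nt(\A_t) = \Nt(\A_1) + \N(\A_t) - \N(\A_1) = \N(\A_t) + \zeta$ for every $t$, where $\zeta \sim \lap(1/\v)$ is the single draw taken at $t=1$. Hence, conditioned on the count trajectory $(\N(\A_1),\ldots,\N(\A_T))$, the output law is the pushforward of $\lap(1/\v)$ under the deterministic, invertible map $\zeta \mapsto (\N(\A_t)+\zeta)_{t=1}^{T}$, and it suffices to bound the ratio of these pushforwards under two adjacent trajectories $\D_{1:T}$ and $\DD_{1:T}$, i.e.\ to verify Definition~\ref{eq_DP2} for this one-parameter family.

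Next I would introduce the distinguishing individual $x$ (present in $\D_1$, absent throughout $\DD_{1:T}$) and set $Z_t = \ind\{x \in \A_t \text{ under } \D_{1:T}\}$, so that $\N(\A_t) = \N(\AA_t) + Z_t$ for all $t$ (and $Z_1=1$; the case $x\notin\A_1$ is analogous and only easier). The hypotheses — Markovian, independent increments, and "probability $1-\rho$ of remaining in $\A$ throughout" — are exactly what let me treat $Z$ as generated independently of the mechanism noise and of everything in $\DD_{1:T}$, so that conditioning on $Z$ does not change $\P(\M(\DD_{1:T})\in S)$, and they give $\P(E)=1-\rho$ for the event $E=\{Z_t \equiv 1\}$. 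On $E$ the deterministic offsets $\N(\A_t)-\N(\A_1)$ and $\N(\AA_t)-\N(\AA_1)$ coincide, so the two pushforwards differ only by a translation of the $\zeta$-coordinate by $Z_1=1$; since the counting query has $L_1$-sensitivity $1$ and $\zeta \sim \lap(1/\v)$, the standard Laplace bound around (\ref{eq_mech}) gives a conditional likelihood ratio in $[e^{-\v},e^{\v}]$. On $E^c$ the count trajectory changes in at most $T$ coordinates, each by at most $1$, and the naive composition/sensitivity bound gives a conditional ratio in $[e^{-T\v},e^{T\v}]$.

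Finally I would mix over $Z$: using that conditioning on $Z$ leaves $\P(\M(\DD_{1:T})\in S)$ unchanged,
$$
\P(\M(\D_{1:T})\in S) = (1-\rho)\,\P(\cdot \mid E) + \rho\,\P(\cdot \mid E^c) \;\le\; \big[(1-\rho)e^{\v} + \rho e^{T\v}\big]\,\P(\M(\DD_{1:T})\in S),
$$
whose right-hand side is $e^{\v_0}\,\P(\M(\DD_{1:T})\in S)$; for the lower bound the same averaging yields the factor $(1-\rho)e^{-\v}+\rho e^{-T\v}$, which is $\ge e^{-\v_0}$ by the elementary inequality $\big[(1-\rho)e^{\v}+\rho e^{T\v}\big]\big[(1-\rho)e^{-\v}+\rho e^{-T\v}\big]\ge 1$ (Cauchy--Schwarz / AM--HM). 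I expect the genuine difficulty to lie in the $E^c$ case together with the decoupling step: because a single shared $\zeta$ is reused across all $T$ responses, one must argue carefully that conditioning on the individual's trajectory is legitimate and that the worst-case conditional leakage is truly captured by the finite factor $e^{T\v}$ rather than being degenerate — this is where the Markov / independent-increments hypothesis must do the real work, and getting that bookkeeping right is the crux.
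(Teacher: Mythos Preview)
Your overall architecture matches the paper exactly: split on the event that the distinguishing individual stays in $\A$ throughout (your $E$, the paper's $Z$), bound the conditional likelihood ratio by $e^{\v}$ on that event and by $e^{T\v}$ off it, average with weights $(1-\rho,\rho)$, and dispatch the lower bound with the same inequality $\big[(1-\rho)e^{\v}+\rho e^{T\v}\big]\big[(1-\rho)e^{-\v}+\rho e^{-T\v}\big]\ge 1$.

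The gap is precisely the one you flag but do not close. After reducing the mechanism to the single draw $\zeta$ via $\Nt(\A_t)=\N(\A_t)+\zeta$, you appeal to a ``naive composition/sensitivity bound'' for the $E^c$ case. That appeal cannot work from the pushforward viewpoint you set up: with one shared $\zeta$, the output of the mechanism is supported on a one-dimensional affine line in $\R^T$, and on $E^c$ the indicator vector $(Z_1,\ldots,Z_T)$ is not constant, so the line for $\D_{1:T}$ and the line for $\DD_{1:T}$ are distinct. Conditional on the trajectory, the likelihood ratio is therefore $0$ or $+\infty$, not $\le e^{T\v}$.

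The paper avoids this by \emph{not} conditioning on the count trajectory. It uses the independent-increments hypothesis to factor the joint law as
\[
\P(\Nt_1=s_1,\ldots,\Nt_T=s_T)=\P(\Nt_1=s_1)\prod_{t=2}^{T}\P(\Nt_t-\Nt_{t-1}=s_t-s_{t-1}),
\]
and then bounds each increment ratio $R_t$ individually by $e^{\v}$, accumulating to $e^{T\v}$ on $E^c$. The crucial point is that the randomness making these $R_t$ finite comes from the stochastic evolution of the database itself, not from the single Laplace draw; your pushforward reduction throws that randomness away. To complete your argument along the paper's lines you must keep the joint law over the Markov evolution and carry out this increment-by-increment comparison rather than treating the trajectory as fixed.
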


Note that when $T \v, \rho$ are very small, we have
\begin{align}
    \v_0
    &= \log (( 1-\rho ) e^\v  + \rho e^{T\v} )  \nonumber \\
    & \sim \log \big((1-\rho)(1+\v) + \rho (1 + T\v)\big)  \nonumber \\
    & \sim \v \big(1 + \rho (T-1)\big). \label{eq_31}
\end{align}

The proof is included in the supplementary document.
Compared with Algo.~\ref{algo_tradition}, which requires a privacy budget of $\v T$, the budget in (\ref{eq_31}) is significantly smaller for small~$\rho$.

\section{Conclusion} \label{sec_con}
In this work, we developed a method named sigma-counting to generate privacy-preserving counting queries. 
Sigma-counting uses the notion of sigma-algebra to construct randomized counts of queries to a database.
We showed that under various circumstances  the proposed approach can be significantly better in utility-privacy trade-offs than the standard method. 
It is particularly suitable for a massive number of overlapping queries to a large database, which is common in many modern database applications. 
An interesting future problem is to relax the assumptions we made for time-varying database and further improve the utility-privacy trade-offs. 
Another future problem is to emulate the proposed approach to study other utility functions.

The \textbf{appendix} contains the proof of Theorem~\ref{thm_evolve} and some experimental studies.

\appendix
 
\subsection*{Data study: Sigma-Counting maintains the ordinal structure of queries} %\label{subsec_data1}
In this data study, we use a simulated database and a publicly available database named     Adult Census Income~\citep{kohavi1996scaling}. This experiment is intended to illustrate the property of Sigma counting that it keeps the ordinal structure of queries. The simulation database consists of $10^5$ rows and $21$ columns, of which elements are binary. We select $9$ variables from the original dataset and encode the dataset into a database with $32561$ rows and $21 $ columns, of which elements are also binary. We also simulated $10^5$ queries. The simulation is generated through independent multinomial distributions with the probabilities $0.05, 0.05, 0.9$, representing the selection of `$1$', `$0$', `both $1$ and $0$', respectively.

In this experiment, to understand the effects of number of columns, we also selected $5$ columns from the real dataset and generate a smaller encoded dataset with $11$ columns and $32561$ rows, of which elements are binary. To match this newly generated dataset, we also generate a simulation dataset with $11$ columns and $10^5$ rows. In summary, we use two smaller datasets with $11$ columns (one simulation dataset and one real dataset) and two similar larger datasets with $21$ columns.

Recall that the proposed Algorithm~\ref{algo_offline} is monotonic in the sense that for any two queries $\A_1,\A_2$ satisfying $\A_1 \subseteq \A_2$, we have $\Nt(\A_1) \leq \Nt(\A_2).$ We generate a sequence of nested query pairs, in which one query is a subset of the other. As proved in Theorem~\ref{thm_mono}, the sigma-counting method keeps the ordinal structure. 

We apply the benchmark method Algorithm~\ref{algo_tradition} and sigma-counting method Algorithm~\ref{algo_offline}, with the identical overall budget. Table~\ref{table_mono} shows that benchmark algorithm breaks ordinal structure frequently. The percentage of total order violations is worsened as the budget becomes smaller, since more noise needs to be injected. Note that the sigma-counting strictly preserves the ordinal structure, so the percentage of total order violations is precisely $0$. Moreover, by comparing the results of datasets with $11$ columns and $21$ columns, we observe that there are more total order violations on the dataset with $21$ columns. It is intuitive since the queries to a dataset with more columns tend to have larger sizes and thus more filtering, which, in turn, leads to smaller counting results. %and thus corresponding counting is generally smaller. 

% \begin{table}[tb]
% \caption{Summary of the Total Order Violations of Algorithm~\ref{algo_tradition}  
% %Queries are independent in all columns and for each column the probabilities of selecting $1$, $0$ and both $1$ and $0$ is 0.05, 0.05, 0.9.
% for different total budgets ($100, 10, 1$) and four databases.
% The sigma-counting method produces zero violation and is not reported.  
% }
% \label{table_mono}
% \begin{center}
% \begin{tabular}{cccccccc}
% \hline
% \multirow{3}{*}{Simulation Data}
% & Budget & $100$     & $10$    & $1$       \\ \cline{2-5} 
% & 11 columns & $0.01\%$  & $4.0\%$ & $26.6\%$   \\ \cline{2-5} 
% & 21 columns & $0.60\%$  & $13.2\%$ & $38.9\%$  \\ 
% \hline
% \multirow{3}{*}{\begin{tabular}[c]{@{}c@{}} Real Data\end{tabular}}
% & Budget & $100$ & $10$  & $1$   \\ \cline{2-5} 
% & 11 columns & $6.3\%$  & $14.2\%$ & $25.3\%$  \\ \cline{2-5} 
% & 21 columns & $10.3\%$  & $21.4\%$ & $32.3\%$   \\ \hline
% \end{tabular}
% \end{center}
% \end{table}

\subsection*{Data study: Sigma-Counting utility} %\label{subsec_data1}   

In this data study, we compare the querying utility of sigma-counting Algorithm~\ref{algo_online} and benchmark method Algorithm~\ref{algo_tradition}, under the same total budget. As described in Algorithm~\ref{algo_online}, the queries are partially online and the calculation of utility is based on both offline and online queries. 
In accordance with the setting of Theorem~\ref{thm_group}, we apply the following rule of clustering. For a pre-determined positive integer $u$, we treat all the subsets with size no larger than $u$ as clusters, and categorize the queries that fall into those clusters.

We will refer to $u$ as the partition size threshold. Meanwhile, the remaining queries will be addressed by Algorithm~\ref{algo_tradition}. The larger $u$, the larger proportion of queries can be categorized into the pre-determined clusters. Consequently, the overall estimation error (as quantified by the mean squared loss) is a weighted average of the errors contributed by the clustered queries and those contributed by the individually perturbed queries. 

In this experiment, we continue using the same data as in the previous section, including two smaller datasets with $11$ columns and two larger datasets with $21 $ columns.  We also simulate two sets of queries in this experiment with the same distribution as before, with sizes $10^5$ and $10^6$.
For better presentation, we define the utility of the sigma-counting algorithm as %$\Alg_{\texttt{sigma-counting}} / \Alg_{\texttt{benchmark}}$. 
$\Alg_{4} / \Alg_{1}$, namely the relative utility. 
% \begin{equation}
% \label{utility_def}
%     \frac{1/E_{sigma}^2}{1/E_{Benchmark}^2},
% \end{equation}
% where $E_{Benchmark}^2$ is the mean squared errors of Benchmark algorithm and $E_{sigma}^2$ is the mean squared errors of  Algorithm~\ref{algo_online}. \jun{Ask Ding which package he deleted}
The numerical results are summarized in Tables~\ref{table_thm4} and~\ref{table_thm4_2}. 

The experimental results indicate the following three points. 
First, Tables~\ref{table_thm4} and~\ref{table_thm4_2} show that the utility of sigma-counting is much greater than that of benchmark algorithm. In other words, the sigma-counting allows privacy-compliant queries with less error.
Second, comparing the two sets of queries with sizes $10^5$ and $10^6$, we can see that the utility improves as the number of queries increases. This is consistent with our theory that  the sigma-counting works particularly well for a large number of overlapping queries. 
Third, as the number of database columns becomes smaller, the sigma-counting performs better compared with the benchmark. This is also consistent with our intuition, since queries are more likely to be clustered for a database with a smaller number of columns. 

% It is important to note that in practice, the queries may not be uniformly distributed and it is possible to achieve higher utility. This experiment uses simulated queries which have a uniform distribution across all columns. Clustering queries is thus across a sparse space which hurts utility.

\begin{table}[tb]
\caption{Summary of the Total Order Violations of Algorithm~\ref{algo_tradition}  
%Queries are independent in all columns and for each column the probabilities of selecting $1$, $0$ and both $1$ and $0$ is 0.05, 0.05, 0.9.
for different total budgets ($100, 10, 1$) and four databases.
The sigma-counting method produces zero violation and is not reported.  
}
\label{table_mono}
\begin{center}
\begin{tabular}{cccccccc}
\hline
\multirow{3}{*}{Simulation Data}
& Budget & $100$     & $10$    & $1$       \\ \cline{2-5} 
& 11 columns & $0.01\%$  & $4.0\%$ & $26.6\%$   \\ \cline{2-5} 
& 21 columns & $0.60\%$  & $13.2\%$ & $38.9\%$  \\ 
\hline
\multirow{3}{*}{\begin{tabular}[c]{@{}c@{}} Real Data\end{tabular}}
& Budget & $100$ & $10$  & $1$   \\ \cline{2-5} 
& 11 columns & $6.3\%$  & $14.2\%$ & $25.3\%$  \\ \cline{2-5} 
& 21 columns & $10.3\%$  & $21.4\%$ & $32.3\%$   \\ \hline
\end{tabular}
\end{center}
\end{table}

\begin{table}[tb]
\caption{Summary of the sigma-counting utility, defined as the ratio of mean squared errors of the benchmark method and sigma-counting method (on the datasets with $11$ columns). 
% Queries are independent in all columns and for each column the probabilities of selecting $1$, $0$ and both $1$ and $0$ is 0.05, 0.05, 0.9.   
% The percentage of queries to which sigma-counting is applied is: $98.7\%$,$93.0\%$,$74.6\%$.
}
\label{table_thm4}
\begin{center}
\begin{tabular}{cccccccc}
\hline
\multirow{3}{*}{Simulation Data}
& Partition Size Threshold $u$
 & $3$    & $2$  & $1$
\\ \cline{2-5}
% & Percentage of queries using Sigma Counting  & $99.9\%$  & $98.7\%$  & $92.9\%$  & $73.5\%$   \\ \cline{2-6} 
& Utility, $10^5 $ Queries   & $77.1$  & $14.0$  & $3.8$ 
  \\ \cline{2-5} 
% & Percentage of queries using Sigma Counting  & $99.8\%$  & $98.7\%$  & $93.0\%$  & $73.6\%$   \\ \cline{2-6} 
& Utility, $10^6 $ Queries    & $77.5$  & $14.2$  & $3.8$ 
   \\ \cline{2-5} 
\hline
\multirow{3}{*}{\begin{tabular}[c]{@{}c@{}} Real Data \end{tabular}} 
% & Percentage of queries using Sigma Counting  & $99.8\%$  & $98.8\%$  & $93.0\%$  & $73.7\%$   \\ \cline{2-6} 
& Utility, $10^5 $ Queries   & $77.3$  & $14.2$  & $3.8$  \\ \cline{2-5} 
% & Percentage of queries using Sigma Counting  & $99.8\%$  & $98.7\%$  & $93.0\%$  & $73.6\%$   \\ \cline{2-6} 
& Utility, $10^6 $ Queries   & $76.9$  & $14.2$  & $3.8$   \\ \cline{2-5} 
\hline
\end{tabular}
\end{center}
\end{table}

\begin{table}[tb]
\caption{
Summary of the sigma-counting utility, defined as the ratio of mean squared errors of the benchmark method and sigma-counting method (on the datasets with $21$ columns). 
% Utilities are defined as ratio of mean squared errors of benchmark algorithm and mean squared errors of sigma-counting. The percentage of queries to which sigma-counting is applied is: $84.8\%$,$64.8\%$,$36.5\%$. Queries are independent in all columns and for each column the probabilities of selecting $1$, $0$ and both $1$ and $0$ is 0.05, 0.05, 0.9. 
}
\label{table_thm4_2}
\begin{center}
\begin{tabular}{cccccccc}
\hline
\multirow{3}{*}{Simulation Data}
& Partition Size Threshold $u$
& $3$    & $2$  & $1$
\\ \cline{2-5}
% & Percentage of queries using Sigma Counting  & $94.7\%$  & $84.8\%$ & $64.9\%$  & $36.5\%$  \\ \cline{2-6} 
& Utility, $10^5$ % & $0.2$  
& $5.2$ & $2.8$ & $1.6$  \\ \cline{2-5} 
% & Percentage of queries using Sigma Counting  & $94.7\%$  & $84.8\%$ & $64.8\%$  & $36.5\%$  \\ \cline{2-6} 
& Utility, $10^6 $ Queries    & $6.5$ & $2.8$ & $1.6$    \\ \cline{2-5} 
\hline
\multirow{3}{*}{\begin{tabular}[c]{@{}c@{}} Real Data \end{tabular}} 
% & Percentage of queries using Sigma Counting  & $94.8\%$  & $84.7\%$ & $64.7\%$  & $36.5\%$  \\ \cline{2-6} 
& Utility, $10^5 $ Queries   & $5.1$ & $2.8$ & $1.6$ \\ \cline{2-5} 
% & Percentage of queries using Sigma Counting  & $94.8\%$  & $84.8\%$ & $64.8\%$  & $36.4\%$  \\ \cline{2-6} 
& Utility, $10^6 $ Queries   & $6.6$ & $2.8$ & $1.6$  \\ \cline{2-5} 
\hline
\end{tabular}
\end{center}
\end{table}

\subsection*{Proof of Theorem~5}
\begin{proof}
Recall that $D_{1:T}$ is an evolving database. Without loss of generality, for any individual Alice who is in the query $A$ at time $t=1$, we denote $\DD_{1:T}$ as the adjacent evolving database without Alice. Recall that $A$ is a counting query, and we denote $A_t$ as the query at time $t$ and $\AA_t$ denote the same query sent to the adjacent database $\DD_t$.

For notational convenience, we let
$$
\Nt_t = \Nt(A_t), \quad
\Nt_t^\prime = \Nt(\AA_t) .
$$
Denote the following event as  event $Z$: the individual Alice always remains in the query $A$ from $t=1$ to $T$.
Note that we have $ \hat{N}^\prime_1 = \hat{N}_1 - 1, $ due to the fact that Alice is in $A_1$ at time $t=1$ in database $\D_1$ but not in $\DD_1.$
Therefore we have
$$
\frac{\P(\Nt_1 = s_1)}{\P(\Nt^\prime_1 = s_1)} = \frac{\P(\Nt^\prime_1 = s_1-1)}{\P(\Nt^\prime_1 = s_1)}  \leq e^{\v}.
$$
% And notice if $Z$ does not opt out of the group. 
Since $N^\prime_t$ is a Markovian process, and we denote the event: 
$$
E_{t, t-1} = \{ \Nt_t - \Nt_{t-1} = s_t - s_{t-1} \},
$$
$$
E^\prime_{t, t-1} = \{ \Nt_t^\prime - \Nt_{t-1}^\prime = s_t - s_{t-1} \}.
$$
Therefore we have
\begin{align}
&\frac{\P(\Nt_1 = s_1, \cdots, \Nt_T = s_T)}{\P(\Nt^\prime_1 = s_1, \cdots, \Nt^\prime_T = s_T)} \nonumber \\ 
&=
\frac{\P(\Nt_1 = s_1, E_{t, t-1}, 2\leq t \leq T, Z)  }
{\P(\Nt^\prime_1 = s_1,  E^\prime_{t, t-1} , 2\leq t \leq T) } + \nonumber \\ 
&\quad \frac{\P(\Nt_1 = s_1, E_{t, t-1} ,2\leq t \leq T, Z^c)  }
{\P(\Nt^\prime_1 = s_1,   E^\prime_{t, t-1},2\leq t \leq T ) } \nonumber \\ 
&= I + II. \nonumber %\label{time_series_estimate}
\end{align}
Note that in the definition of $\Nt^\prime_t$, the individual Alice is not included from the database $\DD_{1:T}$. Thus, $Z$ and $Z^c$ are independent of $\{\Nt^\prime_t\}_{1\leq t \leq T}$.
Note that $P(Z) = 1 - \rho$ and $\{\Nt^\prime_i, 1\leq t \leq T\}$ has independent increments.
For term $I$, notice, Alice is always in the database, therefore we have:
\begin{align*}
    \Nt^\prime_1 = \Nt_1 - 1 = s_1-1, \Nt^\prime_t - \Nt^\prime_1 = \Nt_t - \Nt_1,
\end{align*}
We denote 
$$
E^\prime_{t, 1} = \{ \Nt_t^\prime - \Nt_1^\prime = s_t - s_1 \}
$$
Therefore, term $I$ can be written as:
\begin{align*}
 I & = \frac{\P(\Nt^\prime_1 = s_1-1, E^\prime_{t, 1}, 2\leq t \leq T) ( 1- \rho)}
{\P(\Nt^\prime_1 = s_1,   E^\prime_{t, 1}, 2\leq t \leq T ) }  \nonumber \\
& = ( 1-\rho) \frac{ \P(\Nt^\prime_1 = s_1-1) \prod_{t=2}^T  \P(E^\prime_{t, t-1}) }{\P(\Nt^\prime_1 = s_1) \prod_{t=2}^T \P(E^\prime_{t, t-1})} \nonumber \\
& = ( 1-\rho) \frac{ \P(\Nt^\prime_1 = s_1-1)  }{\P(\Nt^\prime_1 = s_1) } \leq (1-\rho) e^{\v}
\end{align*}
For term $II$, since
$$\hat{N}^\prime_t = \hat{N}_t \textrm{ or } \hat{N}_t - 1 $$  depending on whether Alice is in the query at time $t$ ($2 \leq t \leq T$),
we have
$$\Nt_t^\prime - \Nt_{t-1}^\prime = s_{t} - s_{t-1} \textrm{ or } s_t - s_{t-1}  -1.
$$
Similar to the calculation for term $I$, and recall definition of $E^\prime_{t, t-1}$.
$$
E^\prime_{t, t-1,1} = \{ \Nt_t^\prime - \Nt_{t-1}^\prime = s_t - s_{t-1} \}.
$$
Therefore we have 
\begin{align*} 
% II = &  \rho \frac{\P(\Nt^\prime_1 = s_1-1, \Nt^\prime_t - \Nt^\prime_{t-1} = s_t - s_{t-1} \textrm{ or } s_t - s_{t-1} -1,2 \leq t \leq T)}{\P(N^\prime_1 = s_1, \Nt^\prime_2 - \Nt^\prime_1 = s_t - s_{t-1}, 2 \leq t \leq T)} \nonumber \\ 
II = &  \rho \frac{\P(\Nt^\prime_1 = s_1-1, E^\prime_{t, t-1} \textrm{ or } E^\prime_{t, t-1,1},2 \leq t \leq T)}{\P(N^\prime_1 = s_1, E^\prime_{t, t-1}, 2 \leq t \leq T)} \nonumber \\ 
=
& \rho \frac{\P(\Nt_1 = s_1 -1) }{\P(\Nt^\prime_1 = s_1)} \cdot  \prod_{t=2}^T R_t,  \nonumber \\
\end{align*}
where 
\begin{align}
R_t & = \frac{   \P(\Nt^\prime_t -\Nt^\prime_{t-1}  = s_t - s_{t-1} \textrm{ or } s_t - s_{t-1} - 1)}{ \P(\Nt^\prime_t - \Nt^\prime_{t-1} = s_t - s_{t-1}  )} \nonumber \\
& = \frac{\P(E^\prime_{t, t-1} \textrm{ or } E^\prime_{t, t-1,1})}{\P(E^\prime_{t, t-1} )}.
\end{align}

Notice  we add perturbation to each increment $\Nt^\prime_t - \Nt^\prime_{t-1}$. 
Depending on whether Alice remains in the query $A$ at time $t$,  $R_t$ is either 1 or
\begin{align*}
& R_t = \frac{ \P(\Nt^\prime_t - \Nt^\prime_{t-1} = s_t - s_{t-1} - 1 )}{ \P(\Nt^\prime_t - \Nt^\prime_{t-1} = s_t - s_{t-1})} \leq e^{\v}.
% \nonumber \\
% = 
% & \frac{ \sum_k P(\Nt'_2 - k = S_2 - S_1 - 1, \Nt_1 = k )}{ \sum_k P(\Nt'_2 - k = S_2 - S_1, \Nt_1 = k)} \nonumber \\
% & \leq e^{\v}
\end{align*}
Therefore, we have
$$II \leq (1-\rho) e^{T \v},$$
and further 
$
I+II \leq 
e^{\v_0}, $
where $$\v_0= \log \big( (1-\rho ) e^\v  + \rho e^{T\v} \big).$$
Similarly, it can be verified that 
$$
I+II \geq 
e^{-\v_0'} \de ( 1 - \rho) \cdot e^{-\v} + \rho e^{- T \v}.
$$
Since
$$
\big( ( 1 - \rho) \cdot e^{\v} + \rho e^{ T \v} \big)
\cdot
\big( ( 1 - \rho) \cdot e^{-\v} + \rho e^{- T \v} \big)
> 1,
$$
we have
$e^{\v_0 -\v_0' } > 1$, or equivalently,  $\v_0 > \v_0'.$ 
Consequently, Algorithm~5 %\ref{algo_evolve}
satisfies $\v_0$-DP.
\end{proof}

\balance
\bibliography{privacy,J}

\begin{thebibliography}{32}
\providecommand{\natexlab}[1]{#1}
\providecommand{\url}[1]{\texttt{#1}}
\expandafter\ifx\csname urlstyle\endcsname\relax
  \providecommand{\doi}[1]{doi: #1}\else
  \providecommand{\doi}{doi: \begingroup \urlstyle{rm}\Url}\fi

\bibitem[Alabbadi(2011)]{alabbadi2011mobile}
Mohssen~M Alabbadi.
\newblock Mobile learning (mlearning) based on cloud computing: mlearning as a
  service (mlaas).
\newblock In \emph{Proc. UBICOMM}, 2011.

\bibitem[Armknecht et~al.(2015)Armknecht, Boyd, Carr, Gj{\o}steen, J{\'a}schke,
  Reuter, and Strand]{armknecht2015guide}
Frederik Armknecht, Colin Boyd, Christopher Carr, Kristian Gj{\o}steen, Angela
  J{\'a}schke, Christian~A Reuter, and Martin Strand.
\newblock A guide to fully homomorphic encryption.
\newblock \emph{Cryptology ePrint Archive}, page 1192, 2015.

\bibitem[Chaudhuri et~al.(2011)Chaudhuri, Monteleoni, and
  Sarwate]{chaudhuri2011differentially}
Kamalika Chaudhuri, Claire Monteleoni, and Anand~D Sarwate.
\newblock Differentially private empirical risk minimization.
\newblock \emph{J. Mach. Learn. Res.}, 12\penalty0 (3), 2011.

\bibitem[Chaum et~al.(1988)Chaum, Cr{\'e}peau, and
  Damgard]{chaum1988multiparty}
David Chaum, Claude Cr{\'e}peau, and Ivan Damgard.
\newblock Multiparty unconditionally secure protocols.
\newblock In \emph{Proc. STOC}, pages 11--19, 1988.

\bibitem[Dong et~al.(2019)Dong, Roth, and Su]{dong2019gaussian}
Jinshuo Dong, Aaron Roth, and Weijie~J Su.
\newblock Gaussian differential privacy.
\newblock \emph{arXiv preprint arXiv:1905.02383}, 2019.

\bibitem[Dwork(2011)]{dwork2011differential}
Cynthia Dwork.
\newblock Differential privacy.
\newblock \emph{Encyclopedia of Cryptography and Security}, pages 338--340,
  2011.

\bibitem[Dwork and Nissim(2004)]{dwork2004privacy}
Cynthia Dwork and Kobbi Nissim.
\newblock Privacy-preserving datamining on vertically partitioned databases.
\newblock In \emph{Proc. CRYPTO}, pages 528--544. Springer, 2004.

\bibitem[Dwork et~al.(2006{\natexlab{a}})Dwork, Kenthapadi, McSherry, Mironov,
  and Naor]{dwork2006our}
Cynthia Dwork, Krishnaram Kenthapadi, Frank McSherry, Ilya Mironov, and Moni
  Naor.
\newblock Our data, ourselves: Privacy via distributed noise generation.
\newblock In \emph{Proc. EUROCRYPT}, pages 486--503. Springer,
  2006{\natexlab{a}}.

\bibitem[Dwork et~al.(2006{\natexlab{b}})Dwork, McSherry, Nissim, and
  Smith]{dwork2006calibrating}
Cynthia Dwork, Frank McSherry, Kobbi Nissim, and Adam Smith.
\newblock Calibrating noise to sensitivity in private data analysis.
\newblock In \emph{Theory of cryptography conference}, pages 265--284.
  Springer, 2006{\natexlab{b}}.

\bibitem[Evfimievski et~al.(2003)Evfimievski, Gehrke, and
  Srikant]{evfimievski2003limiting}
Alexandre Evfimievski, Johannes Gehrke, and Ramakrishnan Srikant.
\newblock Limiting privacy breaches in privacy preserving data mining.
\newblock In \emph{Proc. SIGMOD/PODS03}, pages 211--222, 2003.

\bibitem[Gentry(2009)]{gentry2009fully}
Craig Gentry.
\newblock Fully homomorphic encryption using ideal lattices.
\newblock In \emph{Proc. STOC}, pages 169--178, 2009.

\bibitem[Groshen(2015)]{groshen2015opportunities}
Erica~L Groshen.
\newblock Opportunities and challenges facing the bureau of labor statistics.
\newblock \emph{Business Economics}, 50\penalty0 (2):\penalty0 91--95, 2015.

\bibitem[Johnson et~al.(2018)Johnson, Near, and Song]{johnson2018towards}
Noah Johnson, Joseph~P Near, and Dawn Song.
\newblock Towards practical differential privacy for sql queries.
\newblock \emph{Proc. VLDB Endowment}, 11\penalty0 (5):\penalty0 526--539,
  2018.

\bibitem[Kasiviswanathan et~al.(2011)Kasiviswanathan, Lee, Nissim,
  Raskhodnikova, and Smith]{kasiviswanathan2011can}
Shiva~Prasad Kasiviswanathan, Homin~K Lee, Kobbi Nissim, Sofya Raskhodnikova,
  and Adam Smith.
\newblock What can we learn privately?
\newblock \emph{SIAM J. Comput.}, 40\penalty0 (3):\penalty0 793--826, 2011.

\bibitem[Kohavi(1996)]{kohavi1996scaling}
Ron Kohavi.
\newblock Scaling up the accuracy of naive-{B}ayes classifiers: A decision-tree
  hybrid.
\newblock In \emph{Proc. KDD}, volume~96, pages 202--207, 1996.

\bibitem[Kotsogiannis et~al.(2019)Kotsogiannis, Tao, He, Fanaeepour,
  Machanavajjhala, Hay, and Miklau]{privatesql}
Ios Kotsogiannis, Yuchao Tao, Xi~He, Maryam Fanaeepour, Ashwin Machanavajjhala,
  Michael Hay, and Gerome Miklau.
\newblock Privatesql: a differentially private sql query engine.
\newblock \emph{Proceedings of the VLDB Endowment}, 12\penalty0 (11):\penalty0
  1371--1384, 2019.

\bibitem[Krishna(1995)]{krishna1995extending}
Murali~M Krishna.
\newblock Extending the semantics of the outer join operator for un-nesting
  queries to a data base, May~2 1995.
\newblock US Patent 5,412,804.

\bibitem[Li et~al.(2010)Li, Hay, Rastogi, Miklau, and
  McGregor]{li2010optimizing}
Chao Li, Michael Hay, Vibhor Rastogi, Gerome Miklau, and Andrew McGregor.
\newblock Optimizing linear counting queries under differential privacy.
\newblock In \emph{Proc. PODS}, pages 123--134, 2010.

\bibitem[McSherry(2009)]{mcsherry2009privacy}
Frank~D McSherry.
\newblock Privacy integrated queries: an extensible platform for
  privacy-preserving data analysis.
\newblock In \emph{Proc. SIGMOD/PODS}, pages 19--30, 2009.

\bibitem[Mohan et~al.(2012)Mohan, Thakurta, Shi, Song, and
  Culler]{mohan2012gupt}
Prashanth Mohan, Abhradeep Thakurta, Elaine Shi, Dawn Song, and David Culler.
\newblock Gupt: privacy preserving data analysis made easy.
\newblock In \emph{Proc. SIGMOD}, pages 349--360, 2012.

\bibitem[Neunhoeffer et~al.(2020)Neunhoeffer, Wu, and
  Dwork]{neunhoeffer2020private}
Marcel Neunhoeffer, Zhiwei~Steven Wu, and Cynthia Dwork.
\newblock Private post-gan boosting.
\newblock \emph{arXiv preprint arXiv:2007.11934}, 2020.

\bibitem[Oh et~al.(2015)Oh, Thekumparampil, and Xu]{oh2015collaboratively}
Sewoong Oh, Kiran~K Thekumparampil, and Jiaming Xu.
\newblock Collaboratively learning preferences from ordinal data.
\newblock \emph{arXiv preprint arXiv:1506.07947}, 2015.

\bibitem[Proserpio et~al.(2012)Proserpio, Goldberg, and
  McSherry]{proserpio2012calibrating}
Davide Proserpio, Sharon Goldberg, and Frank McSherry.
\newblock Calibrating data to sensitivity in private data analysis.
\newblock \emph{arXiv preprint arXiv:1203.3453}, 2012.

\bibitem[Ribeiro et~al.(2015)Ribeiro, Grolinger, and Capretz]{ribeiro2015mlaas}
Mauro Ribeiro, Katarina Grolinger, and Miriam~AM Capretz.
\newblock Mlaas: Machine learning as a service.
\newblock In \emph{Proc. ICMLA}, pages 896--902. IEEE, 2015.

\bibitem[Ricci et~al.(2011)Ricci, Rokach, and Shapira]{ricci2011introduction}
Francesco Ricci, Lior Rokach, and Bracha Shapira.
\newblock Introduction to recommender systems handbook.
\newblock In \emph{Recommender systems handbook}, pages 1--35. Springer, 2011.

\bibitem[Sarwate and Chaudhuri(2013)]{sarwate2013signal}
Anand~D Sarwate and Kamalika Chaudhuri.
\newblock Signal processing and machine learning with differential privacy:
  Algorithms and challenges for continuous data.
\newblock \emph{IEEE Signal Process. Mag.}, 30\penalty0 (5):\penalty0 86--94,
  2013.

\bibitem[Shokri et~al.(2017)Shokri, Stronati, Song, and
  Shmatikov]{shokri2017membership}
Reza Shokri, Marco Stronati, Congzheng Song, and Vitaly Shmatikov.
\newblock Membership inference attacks against machine learning models.
\newblock In \emph{2017 IEEE Symposium on Security and Privacy (SP)}, pages
  3--18. IEEE, 2017.

\bibitem[Vietri et~al.(2020)Vietri, Tian, Bun, Steinke, and Wu]{vietri2020new}
Giuseppe Vietri, Grace Tian, Mark Bun, Thomas Steinke, and Zhiwei~Steven Wu.
\newblock New oracle-efficient algorithms for private synthetic data release.
\newblock \emph{arXiv preprint arXiv:2007.05453}, 2020.

\bibitem[Xu et~al.(2013)Xu, Zhang, Xiao, Yang, Yu, and Winslett]{dphistogram}
Jia Xu, Zhenjie Zhang, Xiaokui Xiao, Yin Yang, Ge~Yu, and Marianne Winslett.
\newblock Differentially private histogram publication.
\newblock \emph{The VLDB Journal}, 22\penalty0 (6):\penalty0 797--822, 2013.

\bibitem[Yao(1982)]{yao1982protocols}
Andrew~C Yao.
\newblock Protocols for secure computations.
\newblock In \emph{Proc. SFCS}, pages 160--164. IEEE, 1982.

\bibitem[Youn and Cho(2016)]{youn2016nowcast}
Seongwook Youn and Hyun-chong Cho.
\newblock Nowcast of tv market using google trend data.
\newblock \emph{J. Electr. Eng. Technol.}, 11\penalty0 (1):\penalty0 227--233,
  2016.

\bibitem[Yuan et~al.(2015)Yuan, Zhang, Winslett, Xiao, Yang, and
  Hao]{batchlinearqueries}
Ganzhao Yuan, Zhenjie Zhang, Marianne Winslett, Xiaokui Xiao, Yin Yang, and
  Zhifeng Hao.
\newblock Optimizing batch linear queries under exact and approximate
  differential privacy.
\newblock \emph{ACM Transactions on Database Systems (TODS)}, 40\penalty0
  (2):\penalty0 1--47, 2015.

\end{thebibliography}
% \bibliographystyle{IEEEtran}
% \setcitestyle{authoryear,open={((},close={))}} %Citation-related commands
% \bibliographystyle{abbrvnat}
% \bibliographystyle{ieeetr}
\bibliographystyle{plainnat}

\end{document}